\documentclass[11pt]{article}

\usepackage[margin=1in]{geometry}
\usepackage{amsmath,amssymb,amsthm}
\usepackage{mathtools}
\usepackage{bm}
\usepackage{graphicx}
\usepackage{booktabs}
\usepackage{natbib}
\usepackage{setspace}
\usepackage{caption}
\usepackage[hidelinks]{hyperref}

\bibpunct{(}{)}{;}{a}{}{,}
\DeclareMathOperator{\Ga}{Gamma}
\DeclareMathOperator{\GIG}{GIG}
\DeclareMathOperator{\IG}{IG}
\DeclareMathOperator{\diag}{diag}

\newcommand{\R}{\mathbb{R}}
\newcommand{\E}{\mathbb{E}}
\newcommand{\Prob}{\mathbb{P}}

\theoremstyle{plain}
\newtheorem{theorem}{Theorem}
\newtheorem{proposition}{Proposition}
\newtheorem{lemma}{Lemma}

\theoremstyle{definition}
\newtheorem{condition}{Condition}
\newtheorem{remark}{Remark}
\newtheorem{algorithm}{Algorithm}

\begin{document}

\begin{center}
{\Large\bfseries Nonparanormal Bayesian Learning of Directed Acyclic Graphs\\[0.25em]
under Gamma and Inverse-Gamma Innovation Priors:\\[0.25em]
Closed-Form Scores and Informed Sampling}\\[1.3em]

{\large Samaneh~Nazari$^{a}$,\quad Mohammad~Arashi$^{a,\ast}$}\\[0.8em]

\small
$^{a}$Department of Statistics, Faculty of Mathematical Sciences,\\
Ferdowsi University of Mashhad, Mashhad, Iran\\[0.5em]
$^{\ast}$Corresponding author: \texttt{arashi@um.ac.ir}\\[0.6em]
\end{center}

\vspace{0.6em}

\begin{abstract}
\noindent\textbf{Summary.}
Bayesian structure learning for directed acyclic graphs (DAGs) is a central
tool for reconstructing biological signalling and regulatory networks, yet the
methodology has been developed almost exclusively under the assumption that the
data are jointly Gaussian. In the proteomic and flow-cytometry applications that motivate the methodology this assumption is routinely violated, i.e., measured concentrations are strongly right-skewed and heavy-tailed, and a Gaussian DAG fitted to such data recovers spurious or misdirected edges. We develop a fully Bayesian framework for DAG learning in the nonparanormal family, which replaces Gaussianity by the much weaker requirement that unknown strictly increasing transformations of the margins are jointly Gaussian. Working on the modified Cholesky parameterization of the latent precision matrix, we introduce two innovation-variance priors that have not previously been combined with the nonparanormal construction, a non-conjugate Normal--Gamma prior, which decouples coefficient shrinkage from variance regularization, and a conjugate Normal--Inverse-Gamma prior. For both priors we obtain the node-wise marginal likelihood in closed form---through a modified Bessel function of the third kind for the Gamma prior and through a Student-$t$ form for the Inverse-Gamma prior---so that every move of a Markov chain Monte Carlo sampler over DAG space can be scored without numerical integration. Exploiting these closed-form scores we build a locally-balanced informed sampler whose Metropolis--Hastings acceptance probability collapses to a ratio of neighbourhood normalizers, and a Bessel-free score that scales the sampler to hundreds of nodes. On simulated data the nonparanormal samplers match Gaussian methods when the data are Gaussian and dominate them sharply when the margins are skewed. On a human T-cell protein-signalling
data they recover well-established interactions from the literature consensus
network at high posterior probability and clearly outperform constraint-based
competitors, performing comparably to a Gaussian Bayesian model.
\end{abstract}

\vspace{0.4em}
\noindent\textit{Key words}: Bayesian network; Generalized inverse Gaussian distribution; Modified Cholesky decomposition;
Nonparanormal; Shrinkage prior.

\vspace{0.8em}

\section{Introduction}
\label{sec:intro}

Directed acyclic graph (DAG) models, also called Bayesian networks, encode the
conditional independence structure of a multivariate distribution through a
directed graph and have become a standard formalism for reconstructing causal
mechanisms in the biomedical sciences \citep{spirtes2000,pearl2009,sachs2005}.
In a flow-cytometry experiment, for example, the simultaneous abundances of a
panel of signalling proteins are recorded in many thousands of individual cells,
and the scientific objective is to recover the directed pathway through which a
perturbation of one protein propagates to the others. When the joint law of the
variables is a Gaussian DAG, the precision matrix factorizes through the
modified Cholesky decomposition \citep{pourahmadi1999}, the graph is encoded by
the support of the Cholesky factor, and a substantial Bayesian literature
provides priors, marginal likelihoods, and samplers for learning that support
\citep{benDavid2015,cao2019,castelletti2020,castelletti2022bcdag}.

The Gaussian assumption is, however, an awkward fit to the data that motivate
the methodology. Figure~\ref{fig:motivation} displays four representative
margins from the human T-cell protein-signalling data of
\citet{sachs2005}, where every margin is sharply right-skewed, with sample skewness between two and four and excess kurtosis ranging from roughly ten to thirty, and the Gaussian density fitted by moment matching (solid curve) misrepresents both the mode and the tail. The normal quantile--quantile plots in the lower row bend systematically away from the diagonal. A Gaussian DAG fitted directly to such data must absorb this marginal misspecification into the dependence structure, and the result is well documented, e.g., spurious edges between marginally heavy-tailed variables, and reversed edge orientations driven by skewness rather than conditional dependence \citep{harris2013,liu2009}. Discretizing the data, the historically common
alternative, discards quantitative information and introduces sensitivity to bin
boundaries.

\begin{figure}[t]
\centering
\includegraphics[width=\textwidth]{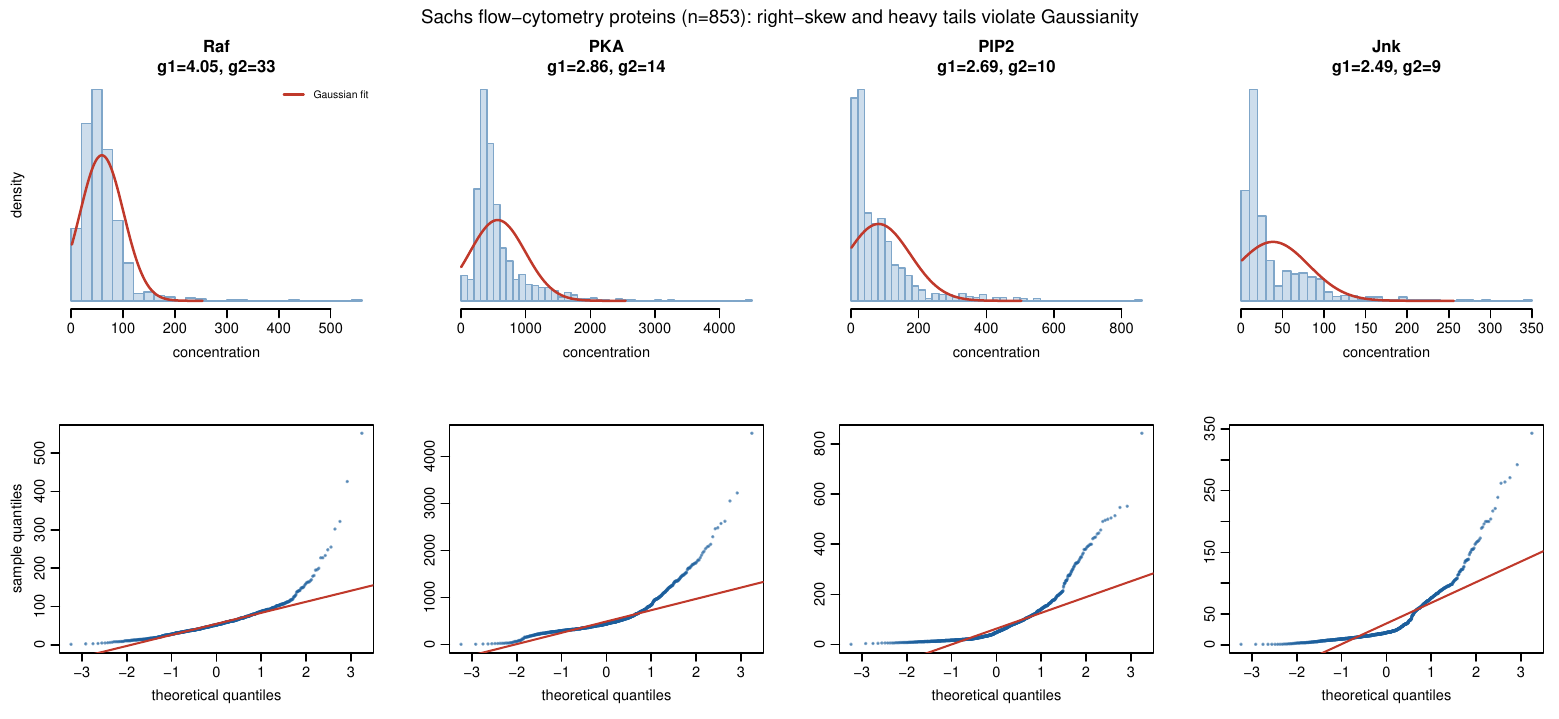}
\caption{Marginal distributions of four signalling proteins from the human T-cell
flow-cytometry data of \citet{sachs2005} ($n=853$). Top row:
histograms with the moment-matched Gaussian density (solid); $\widehat\gamma_1$
and $\widehat\gamma_2$ denote sample skewness and excess kurtosis. Bottom row:
normal quantile--quantile plots. The pronounced right-skew and heavy tails make
a joint Gaussian model untenable and motivate a semiparametric relaxation.}
\label{fig:motivation}
\end{figure}

The nonparanormal family of \citet{liu2009} offers exactly the relaxation
required. A random vector $\bm X=(X_1,\dots,X_p)^\top$ is nonparanormal if there
exist strictly increasing transformations $f_1,\dots,f_p$ such that
$\bm Z=(f_1(X_1),\dots,f_p(X_p))^\top$ is multivariate Gaussian; equivalently,
the dependence is governed by a Gaussian copula while the margins are
unrestricted. The family is large enough to accommodate the skewness and heavy
tails of Figure~\ref{fig:motivation} yet retains the entire conditional
independence calculus of the Gaussian law, because monotone marginal
transformations leave the graph invariant. Figure~\ref{fig:schematic}
illustrates the construction; skewed observed margins (a) are mapped by unknown
monotone transforms (b) to a latent Gaussian vector whose precision matrix
factorizes along a DAG (c). In the undirected setting the nonparanormal has been thoroughly developed, both frequentist \citep{liu2009,liu2012,xue2012,harris2013} and, more recently, Bayesian; but for directed graphs, where the modified Cholesky factor carries the causal
ordering, a Bayesian treatment is essentially absent. This is the gap the
present paper addresses.

\begin{figure}[t]
\centering
\includegraphics[width=\textwidth]{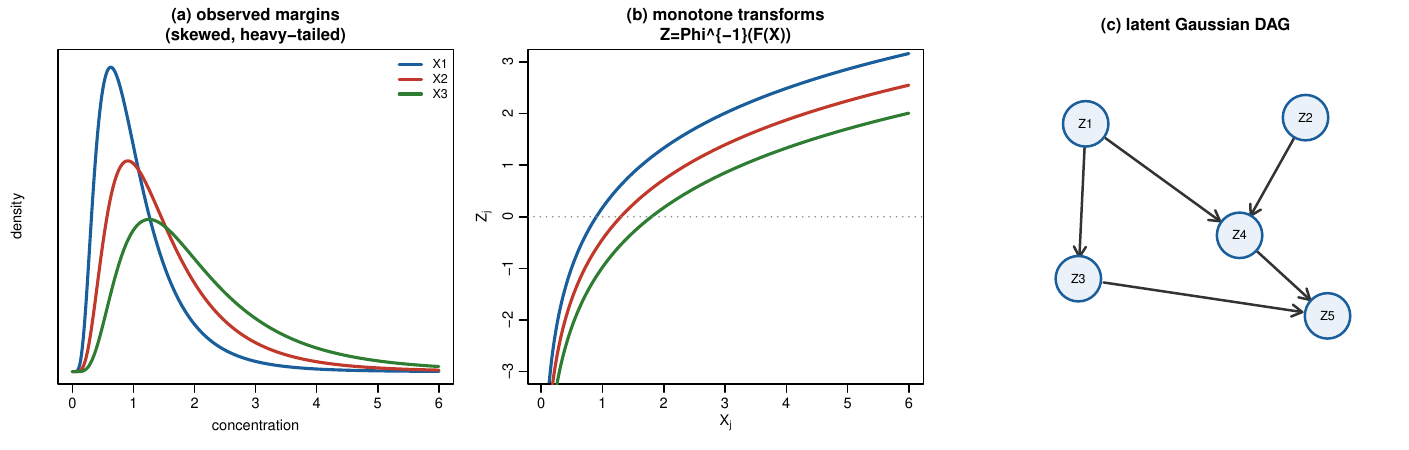}
\caption{The nonparanormal DAG model. Observed margins (a) are skewed and
heavy-tailed; unknown strictly increasing transforms
$f_j=\Phi^{-1}\!\circ F_j$ (b) map them to a latent Gaussian vector (c) whose
precision matrix admits the modified Cholesky factorization
$\Sigma^{-1}=(I-L)^\top D^{-1}(I-L)$, with the support of $L$ encoding the DAG.}
\label{fig:schematic}
\end{figure}

Two obstacles have kept Bayesian DAG learning Gaussian. The first is the
marginal transforms: in a fully Bayesian treatment they are infinite-dimensional
nuisance parameters. We dispose of them by the normal-score device of
\citet{liu2009}: each margin is mapped through its Winsorized empirical
distribution function and the standard normal quantile function, producing
pseudo-Gaussian variables on which the Gaussian DAG machinery operates. The
price is a transform-estimation error, and a central part of our analysis is a
propagation bound (Theorem~\ref{thm:propagation}) showing that this error enters
the node scores only at the nonparanormal rate $\sqrt{\log p/n}$, so that the
selection guarantees of the Gaussian model are inherited. Figure~\ref{fig:transform}
shows the transform mapping a strongly skewed margin to an approximately standard
normal one.

\begin{figure}[t]
\centering
\includegraphics[width=0.82\textwidth]{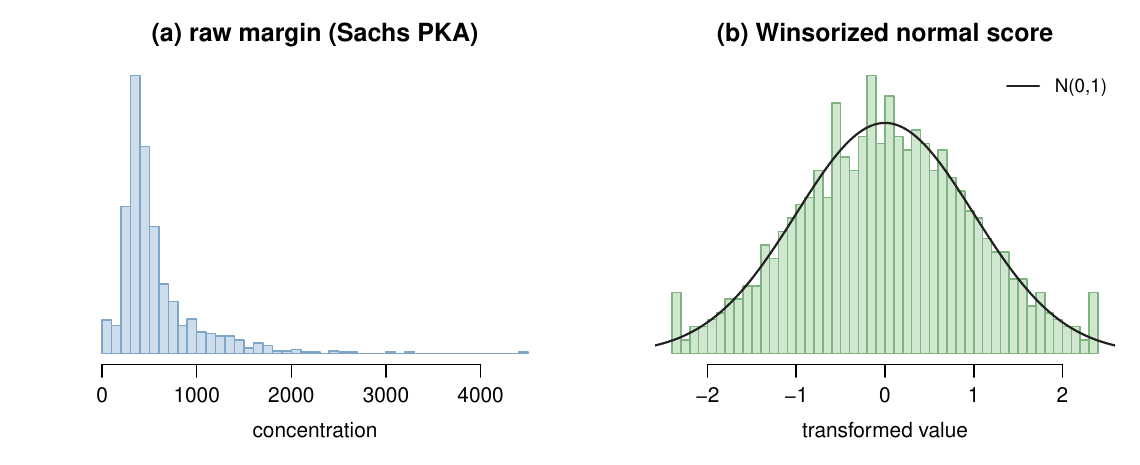}
\caption{Effect of the Winsorized normal-score transform on a skewed margin.
(a) Raw protein concentrations; (b) the transformed variable
$\widehat Z_j=\Phi^{-1}\{\widetilde F_j(X_j)\}$ overlaid with the $N(0,1)$
density. The transform removes marginal skewness while preserving the rank
information that determines the latent dependence.}
\label{fig:transform}
\end{figure}

The second obstacle is the prior on the latent Gaussian DAG parameters. The
conjugate DAG-Wishart prior \citep{benDavid2015,cao2019} couples the shrinkage
applied to the Cholesky coefficients to the regularization of the innovation
variances through a single shape parameter, which is restrictive when, as in
sparse network recovery, one wishes to shrink coefficients aggressively while
leaving the variances comparatively free. We therefore study two innovation
priors on the modified-Cholesky parameters $(L,D)$ that decouple these two
roles. The first places independent Gamma priors on the innovation variances
$D_{jj}$ and conditionally Gaussian priors on the Cholesky coefficients; we call
the resulting model the nonparanormal Normal--Gamma model (NPN-NG). The second
places independent Inverse-Gamma priors on the variances, giving the
nonparanormal Normal--Inverse-Gamma model (NPN-NIG). Neither combination has
appeared in the nonparanormal DAG setting, and the two priors behave differently
in a way we make precise: the Gamma prior shrinks weakly supported edges more
aggressively, controlling false discoveries, whereas the Inverse-Gamma prior is
conjugate and computationally lighter.

A recurring concern about non-conjugate priors is that they forfeit the
closed-form marginal likelihoods on which efficient samplers over DAG space
depend. We show that this concern is unfounded here. Integrating the Cholesky
coefficients and then the innovation variance yields, for the Gamma prior, a
node score expressed through the modified Bessel function of the third kind
(Theorem~\ref{thm:scoreNG}), and for the Inverse-Gamma prior a Student-$t$ score
(Theorem~\ref{thm:scoreNIG}). Because every neighbour of the current DAG can
thus be scored cheaply, we can do better than a random-walk sampler: we
construct a locally-balanced informed proposal \citep{zanella2020} whose
acceptance probability collapses to a ratio of neighbourhood normalizers
(Proposition~\ref{prop:informed}) and which improves the effective sample size
per second by several-fold (between two and nine times in our experiments). To
reach hundreds of nodes we
replace the Bessel evaluation by a uniform asymptotic approximation with relative
error $O(n^{-2})$ and bound the resulting perturbation of the stationary
distribution (Proposition~\ref{prop:bessfree}).

The contributions of the paper are as follows. (i) We formulate Bayesian
nonparanormal DAG learning and combine it with Gamma and Inverse-Gamma
innovation priors, neither of which has previously been used in this setting.
(ii) We characterize the differential shrinkage of the two
priors through the posterior mean of the innovation precision, obtaining an
explicit quadratic characterization of which prior shrinks more
(Theorem~\ref{thm:shrinkage}). (iii) We prove a propagation
bound transferring the nonparanormal transform rate to the scores
(Theorem~\ref{thm:propagation}) and, building on it, strong graph-selection
consistency under $\log p=o(n)$ (Theorem~\ref{thm:selection}). (iv) We give an
informed, scalable sampler with reversibility and optimality guarantees. 

Section~\ref{sec:model} sets up the nonparanormal DAG model and the transform.
Section~\ref{sec:prior} introduces the two priors. Section~\ref{sec:scores}
derives the closed-form scores and establishes order-invariance.
Section~\ref{sec:comp} develops the full conditionals, the informed sampler, and
the scalable score. Section~\ref{sec:theory} states the shrinkage,
propagation, and selection-consistency theorems. Sections~\ref{sec:sim}
and~\ref{sec:real} report the simulation study and the Sachs application, and
Section~\ref{sec:disc} concludes. All proofs are collected in the Appendix.

\section{The nonparanormal DAG model}
\label{sec:model}


Let $\bm X=(X_1,\dots,X_p)^\top$ be the observed vector and suppose it is
nonparanormal: there exist strictly increasing differentiable functions
$f_1,\dots,f_p:\R\to\R$ such that
\begin{eqnarray}
\bm Z=(f_1(X_1),\dots,f_p(X_p))^\top\sim N_p(\bm 0,\Sigma),
\label{eq:npn}
\end{eqnarray}
where, for identifiability of the transforms, $\Sigma$ is a correlation matrix
\citep{liu2009}. Conditional independence among the $X_j$ coincides with
conditional independence among the latent $Z_j$, because the transforms act
coordinatewise and are monotone; the graphical structure is therefore a property
of $\Sigma^{-1}$ alone.

Fix a topological ordering of the variables and assume, after relabelling, that
it is $1,2,\dots,p$, so that a candidate DAG $\mathcal D$ has edges only from
lower- to higher-indexed nodes. Write $\mathrm{pa}_{\mathcal D}(j)\subseteq
\{1,\dots,j-1\}$ for the parent set of node $j$. The modified Cholesky
decomposition \citep{pourahmadi1999} writes
\begin{eqnarray}
\Sigma^{-1}=(I-L)^\top D^{-1}(I-L),
\label{eq:cholesky}
\end{eqnarray}
where $L=(L_{jk})$ is strictly lower triangular, with $L_{jk}\ne 0$ permitted
only when $k\in\mathrm{pa}_{\mathcal D}(j)$, and $D=\diag(D_{11},\dots,D_{pp})$
collects the innovation variances. Equation~\eqref{eq:cholesky} is equivalent to
the system of Gaussian structural equations
\begin{eqnarray}
Z_j=\sum_{k\in\mathrm{pa}_{\mathcal D}(j)} L_{jk}\,Z_k+\varepsilon_j,
\qquad \varepsilon_j\sim N(0,D_{jj}),
\qquad j=1,\dots,p,
\label{eq:sem}
\end{eqnarray}
with mutually independent innovations. The likelihood thus factorizes node-wise,
the $j$th factor being a Gaussian regression of $Z_j$ on its parents. This
factorization, inherited from the latent Gaussian law, is what makes the
modified Cholesky parameterization so convenient: learning the DAG reduces to
$p$ coupled variable-selection problems, one per node.

\subsection{The normal-score transform}
\label{sec:transform}

The transforms $f_j$ are unknown. Following \citet{liu2009} we estimate each by
the Winsorized normal-score map. Given a sample $X_{1j},\dots,X_{nj}$ with
empirical distribution function $\widehat F_j$, define the truncation level
\begin{eqnarray}
\delta_n=\frac{1}{4\,n^{1/4}\sqrt{\pi\log n}}
\label{eq:delta}
\end{eqnarray}
and the Winsorized empirical distribution function
$\widetilde F_j(x)=\min\{1-\delta_n,\max(\delta_n,\widehat F_j(x))\}$, and set
\begin{eqnarray}
\widehat Z_{ij}=\Phi^{-1}\{\widetilde F_j(X_{ij})\},\qquad i=1,\dots,n,\;
j=1,\dots,p,
\label{eq:nsc}
\end{eqnarray}
where $\Phi$ is the standard normal distribution function. The truncation in
\eqref{eq:delta} prevents the boundary ranks from being mapped to $\pm\infty$ and
is calibrated so that the bias it introduces is of smaller order than the
stochastic error; the choice \eqref{eq:delta} is that of \citet{liu2009}. We
write $\widehat{\bm Z}=(\widehat Z_{ij})\in\R^{n\times p}$ for the matrix of
transformed observations, $\widehat{\bm z}_j$ for its $j$th column, and, for a
parent set $S$, $\widehat{\bm Z}_S\in\R^{n\times|S|}$ for the submatrix of
columns indexed by $S$. All subsequent computation treats $\widehat{\bm Z}$ as
data; Theorem~\ref{thm:propagation} quantifies the cost of doing so.

\section{Normal-Gamma(Inverse-Gamma) priors}
\label{sec:prior}

For node $j$ with parent set $S=\mathrm{pa}_{\mathcal D}(j)$ of size $s=|S|$,
collect the active Cholesky coefficients in $\bm\beta_j=(L_{jk})_{k\in S}
\in\R^{s}$. Conditional on the innovation variance we take a Gaussian
coefficient prior scaled by that variance,
\begin{eqnarray}
\bm\beta_j\mid D_{jj}\sim N_s(\bm 0,\;\tau^2 D_{jj}\,I_s),
\label{eq:betaprior}
\end{eqnarray}
with $\tau^2>0$ a fixed scale. The scaling by $D_{jj}$ renders the conditional
posterior of $\bm\beta_j$ Gaussian with a variance-free mean and is the device
that produces the closed forms of Section~\ref{sec:scores}. The two models
differ only in the prior on $D_{jj}$. The Normal--Gamma model (NPN-NG)
takes
\begin{eqnarray}
D_{jj}\sim\Ga(a,b),\qquad
\pi_{\mathrm{NG}}(D)=\frac{b^{a}}{\Gamma(a)}\,D^{a-1}e^{-bD},
\qquad a,b>0,
\label{eq:ngprior}
\end{eqnarray}
a prior on the variance itself rather than its reciprocal, which makes the model
non-conjugate. The Normal--Inverse-Gamma model (NPN-NIG) takes
\begin{eqnarray}
D_{jj}\sim\IG(a_0,b_0),\qquad
\pi_{\mathrm{NIG}}(D)=\frac{b_0^{a_0}}{\Gamma(a_0)}\,D^{-a_0-1}e^{-b_0/D},
\qquad a_0,b_0>0,
\label{eq:nigprior}
\end{eqnarray}
the conjugate choice. Figure~\ref{fig:priors}(a) contrasts the two densities:
the Gamma prior places no mass arbitrarily close to zero and a light right tail,
whereas the Inverse-Gamma prior has a heavy right tail and vanishing mass at the
origin. These differences propagate to the node scores, displayed in
Figure~\ref{fig:priors}(b) and analysed in Section~\ref{sec:theory}.

\begin{figure}[t]
\centering
\includegraphics[width=0.86\textwidth]{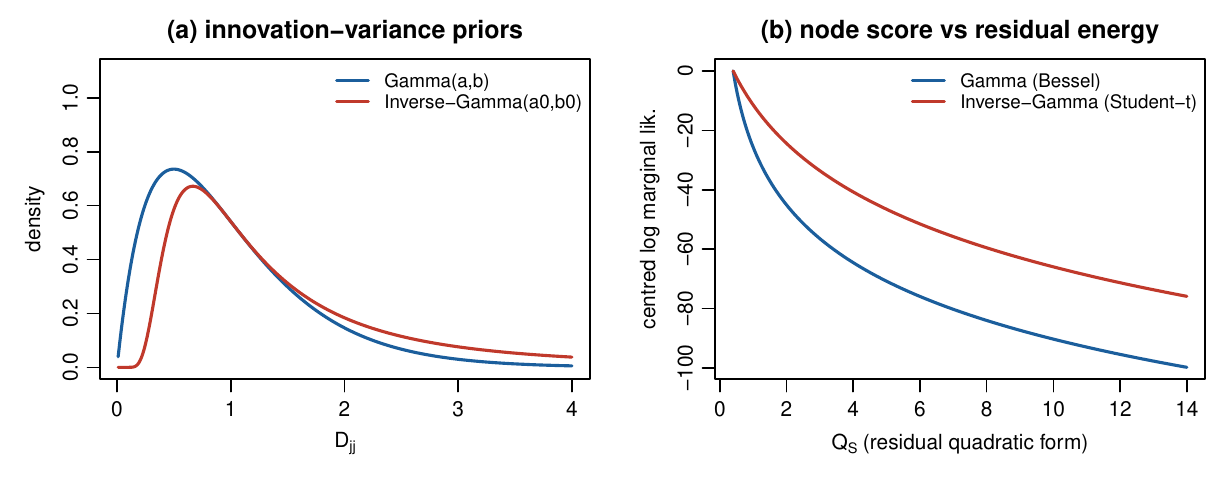}
\caption{(a) The Gamma and Inverse-Gamma innovation-variance priors. (b) The
induced node score (centred log marginal likelihood) as a function of the
residual quadratic form $Q_S$: the Bessel (Gamma) score and the Student-$t$
(Inverse-Gamma) score.}
\label{fig:priors}
\end{figure}

Over DAGs we use the edge prior that treats each admissible edge as an
independent Bernoulli inclusion,
\begin{eqnarray}
\pi(\mathcal D)\propto \theta^{\,|\mathcal D|}(1-\theta)^{\binom{p}{2}-|\mathcal D|},
\qquad 0<\theta<1,
\label{eq:dagprior}
\end{eqnarray}
where $|\mathcal D|$ is the number of edges; equivalently each node $j$ carries
an independent inclusion probability $\theta$ on each of its $j-1$ admissible
parents. For high-dimensional sparsity we allow $\theta=\theta_n\to0$ at a rate
specified in Section~\ref{sec:theory}. As is standard in Bayesian structure
learning and as the informed sampler of Section~\ref{sec:comp} enforces in
practice, we restrict the support of \eqref{eq:dagprior} to DAGs of in-degree at
most a fixed $s_{\max}$ (formally, multiply \eqref{eq:dagprior} by the indicator
$\mathbb{1}\{\max_j|\mathrm{pa}_{\mathcal D}(j)|\le s_{\max}\}$); this caps the
size of every regression encountered and is what allows the eigenvalue control of
Section~\ref{sec:theory} to hold uniformly over the model space. The prior
\eqref{eq:dagprior} multiplies the node scores to give the unnormalized DAG
posterior
\begin{eqnarray}
\pi(\mathcal D\mid\widehat{\bm Z})\;\propto\;\pi(\mathcal D)
\prod_{j=1}^{p} m^{\bullet}\!\big(j,\mathrm{pa}_{\mathcal D}(j)\big),
\label{eq:dagpost}
\end{eqnarray}
where $m^{\bullet}$ is the node score of the chosen prior, derived next.

\section{Node scores}
\label{sec:scores}

The node score is the marginal likelihood of the $j$th regression after the
coefficients $\bm\beta_j$ and the innovation variance $D_{jj}$ have been
integrated out,
\begin{eqnarray}
m^{\bullet}(j,S)=\int_0^\infty\!\!\int_{\R^{s}}
N_n\!\big(\widehat{\bm z}_j;\widehat{\bm Z}_S\bm\beta,\,D\,I_n\big)\,
N_s\!\big(\bm\beta;\bm 0,\tau^2 D\,I_s\big)\,
\pi_{\bullet}(D)\,d\bm\beta\,dD .
\label{eq:scoredef}
\end{eqnarray}
Both integrals are available in closed form. Define the $s\times s$ matrices and
scalars
\begin{eqnarray}
\Lambda_S=\widehat{\bm Z}_S^\top\widehat{\bm Z}_S+\tau^{-2}I_s,\qquad
\widehat{\bm\mu}_S=\Lambda_S^{-1}\widehat{\bm Z}_S^\top\widehat{\bm z}_j,\qquad
M_S=I_n+\tau^2\widehat{\bm Z}_S\widehat{\bm Z}_S^\top,
\label{eq:Mdef}
\end{eqnarray}
and the residual quadratic form
\begin{eqnarray}
Q_S=\widehat{\bm z}_j^\top M_S^{-1}\widehat{\bm z}_j
=\widehat{\bm z}_j^\top\widehat{\bm z}_j
-\widehat{\bm z}_j^\top\widehat{\bm Z}_S\,\Lambda_S^{-1}\,
\widehat{\bm Z}_S^\top\widehat{\bm z}_j .
\label{eq:Qdef}
\end{eqnarray}
The second equality in \eqref{eq:Qdef}, proved in Appendix~\ref{app:scores} by
the Woodbury identity, reduces the cost of $Q_S$ from $O(n^3)$ to $O(ns^2)$ and
shows that $Q_S$ is the penalized residual sum of squares of the ridge
regression of $\widehat{\bm z}_j$ on $\widehat{\bm Z}_S$. Likewise
$|M_S|=|\tau^2\Lambda_S|=\tau^{2s}|\Lambda_S|$ by Sylvester's identity. Because the
ridge penalty $\tau^{-2}I_s$ makes $\Lambda_S$ positive definite, $Q_S>0$ holds
for every parent set with $s<n$, which we assume throughout; the scores
\eqref{eq:scoreNG}--\eqref{eq:scoreNIG} are then well defined, the Bessel integral
in particular converging precisely because $\chi=Q_S>0$.

\begin{theorem}[Bessel score under the Gamma prior]
\label{thm:scoreNG}
Under the NPN-NG model \eqref{eq:betaprior}--\eqref{eq:ngprior}, the node score
is
\begin{eqnarray}
m^{\mathrm{NG}}(j,S)=(2\pi)^{-n/2}\,|M_S|^{-1/2}\,
\frac{b^{a}}{\Gamma(a)}\,
2\left(\frac{Q_S}{2b}\right)^{(2a-n)/4}
K_{a-n/2}\!\big(\sqrt{2b\,Q_S}\big),
\label{eq:scoreNG}
\end{eqnarray}
where $K_\nu$ is the modified Bessel function of the third kind of order $\nu$.
\end{theorem}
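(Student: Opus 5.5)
The plan is to evaluate the double integral \eqref{eq:scoredef} in two stages: first integrate out $\bm\beta$ for fixed $D$, then integrate out $D$ against the Gamma prior \eqref{eq:ngprior}.

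\emph{Integrating out the coefficients.} For fixed $D$, $\widehat{\bm z}_j=\widehat{\bm Z}_S\bm\beta+\bm\varepsilon$ with $\bm\beta\sim N_s(\bm 0,\tau^2DI_s)$ and $\bm\varepsilon\sim N_n(\bm 0,DI_n)$ independent. Hence marginally $\widehat{\bm z}_j\sim N_n(\bm 0,D M_S)$, with $M_S$ as in \eqref{eq:Mdef}. This gives
\[
\int_{\R^s}N_n(\widehat{\bm z}_j;\widehat{\bm Z}_S\bm\beta,DI_n)\,N_s(\bm\beta;\bm 0,\tau^2DI_s)\,d\bm\beta
=(2\pi)^{-n/2}D^{-n/2}|M_S|^{-1/2}\exp\!\Big(-\frac{Q_S}{2D}\Big),
\]
where $|DM_S|=D^n|M_S|$ and $Q_S=\widehat{\bm z}_j^\top M_S^{-1}\widehat{\bm z}_j$. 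One can also complete the square in $\bm\beta$ using $\Lambda_S$ and $\widehat{\bm\mu}_S$; that route produces the second expression in \eqref{eq:Qdef} and the factor $\tau^{-s}|\Lambda_S|^{-1/2}=|M_S|^{-1/2}$ via Sylvester's identity. The key point is that, because the prior covariance is scaled by $D$, the $D$-dependence factors as exactly $D^{-n/2}e^{-Q_S/(2D)}$ and does not enter the determinant.

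\emph{Integrating out the variance.} Multiplying by $\pi_{\mathrm{NG}}(D)$ leaves
\[
(2\pi)^{-n/2}|M_S|^{-1/2}\frac{b^a}{\Gamma(a)}\int_0^\infty D^{a-n/2-1}\exp\!\Big(-bD-\frac{Q_S}{2D}\Big)\,dD .
\]
This is an unnormalized GIG kernel. I would invoke the standard identity
\[
\int_0^\infty x^{\nu-1}\exp\!\big(-\tfrac12(\psi x+\chi/x)\big)\,dx=2(\chi/\psi)^{\nu/2}K_\nu(\sqrt{\chi\psi}),\qquad \chi,\psi>0,
\]
which follows from the integral representation $K_\nu(z)=\tfrac12\int_0^\infty t^{\nu-1}e^{-z(t+1/t)/2}\,dt$ after the substitution $x=t\sqrt{\chi/\psi}$. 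Take $\nu=a-n/2$, $\psi=2b$ and $\chi=Q_S$. Then $(\chi/\psi)^{\nu/2}=(Q_S/2b)^{(2a-n)/4}$ and $\sqrt{\chi\psi}=\sqrt{2bQ_S}$, which yields \eqref{eq:scoreNG}.

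\emph{Justification.} Fubini applies because the integrand is nonnegative. Convergence of the $D$-integral needs $\chi=Q_S>0$, since $\nu$ may be negative; this holds by the positivity remark preceding the theorem. I do not expect a genuine obstacle. The only care needed is in the bookkeeping of the $D^{-n/2}$ power and the matching of the GIG parametrization, including the symmetry $K_{-\nu}=K_\nu$, so that the exponent $(2a-n)/4$ and the Bessel order come out exactly as stated.
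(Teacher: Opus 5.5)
Your proposal is correct and follows the paper's proof essentially step for step. You integrate out $\bm\beta$ by the Gaussian convolution to get $(2\pi D)^{-n/2}|M_S|^{-1/2}e^{-Q_S/(2D)}$, and then evaluate the remaining $D$-integral with the generalized inverse Gaussian normalizer (the paper's Lemma~\ref{lem:gigint}, derived from the same symmetric Bessel integral representation) at $\nu=a-n/2$, $\psi=2b$, $\chi=Q_S$.
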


\begin{theorem}[Student-$t$ score under the Inverse-Gamma prior]
\label{thm:scoreNIG}
Under the NPN-NIG model \eqref{eq:betaprior},\eqref{eq:nigprior}, the node score
is
\begin{eqnarray}
m^{\mathrm{NIG}}(j,S)=(2\pi)^{-n/2}\,|M_S|^{-1/2}\,
\frac{b_0^{a_0}}{\Gamma(a_0)}\,
\Gamma\!\Big(\tfrac{n}{2}+a_0\Big)\,
\Big(\tfrac{Q_S}{2}+b_0\Big)^{-(n/2+a_0)} .
\label{eq:scoreNIG}
\end{eqnarray}
\end{theorem}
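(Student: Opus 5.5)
The argument integrates out $\bm\beta$ first and then $D$. The first step is shared with Theorem~\ref{thm:scoreNG}; only the final one-dimensional integral changes.

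\textbf{Integrating out the coefficients.} Fix $D>0$. Under \eqref{eq:betaprior}, $\widehat{\bm z}_j=\widehat{\bm Z}_S\bm\beta+\bm\varepsilon$ with $\bm\beta\sim N_s(\bm 0,\tau^2 D I_s)$ independent of $\bm\varepsilon\sim N_n(\bm 0,DI_n)$. Hence $\widehat{\bm z}_j\mid D$ is Gaussian with mean zero and covariance $D(I_n+\tau^2\widehat{\bm Z}_S\widehat{\bm Z}_S^\top)=DM_S$. The inner integral in \eqref{eq:scoredef} is therefore
\[
(2\pi)^{-n/2}\,|DM_S|^{-1/2}\exp\!\Big(-\tfrac{1}{2D}\widehat{\bm z}_j^\top M_S^{-1}\widehat{\bm z}_j\Big)
=(2\pi)^{-n/2}|M_S|^{-1/2}\,D^{-n/2}e^{-Q_S/(2D)},
\]
using $|DM_S|=D^n|M_S|$ and the definition \eqref{eq:Qdef} of $Q_S$. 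The same result follows by completing the square in $\bm\beta$ with $\Lambda_S$ and $\widehat{\bm\mu}_S$, together with the Woodbury and Sylvester identities already cited for \eqref{eq:Qdef}; the covariance route is cleaner.

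\textbf{Integrating out the variance.} Multiply by $\pi_{\mathrm{NIG}}(D)$ from \eqref{eq:nigprior}. This leaves
\[
(2\pi)^{-n/2}|M_S|^{-1/2}\frac{b_0^{a_0}}{\Gamma(a_0)}\int_0^\infty D^{-(n/2+a_0)-1}\exp\!\Big\{-\Big(\tfrac{Q_S}{2}+b_0\Big)/D\Big\}\,dD .
\]
The integrand is an unnormalized $\IG(n/2+a_0,\;Q_S/2+b_0)$ density. By the inverse-gamma normalizing constant (equivalently, the substitution $u=1/D$ reduces it to a Gamma integral), the integral equals $\Gamma(n/2+a_0)\,(Q_S/2+b_0)^{-(n/2+a_0)}$. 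This gives \eqref{eq:scoreNIG}.

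\textbf{Where the care is needed.} There is no real obstacle. The only points requiring attention are the factoring $|DM_S|^{-1/2}=D^{-n/2}|M_S|^{-1/2}$, which is what makes the integrand inverse-gamma in $D$, and justifying Fubini by nonnegativity. Convergence needs only $a_0>0$ and $b_0>0$; unlike the Bessel case, it does not even require $Q_S>0$.
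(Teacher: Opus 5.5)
Your proposal is correct and follows the paper's proof essentially verbatim. You integrate $\bm\beta$ out via the Gaussian marginal $N_n(\bm 0, DM_S)$, factor $|DM_S|^{-1/2}=D^{-n/2}|M_S|^{-1/2}$, and evaluate the remaining $D$-integral as an inverse-gamma normalizer, which the paper does through the same substitution $r=1/D$ to a Gamma integral; your side remark that convergence needs only $a_0,b_0>0$ and not $Q_S>0$ is also right.
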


Both scores depend on the data only through the pair $(Q_S,|M_S|)$, and both are
computed in $O(ns^2)$ operations. The only nonelementary ingredient is the
single Bessel evaluation in \eqref{eq:scoreNG}, which Section~\ref{sec:comp}
shows how to remove. We work throughout on the log scale, where
\eqref{eq:scoreNG} is evaluated through the exponentially scaled Bessel function
$\widetilde K_\nu(x)=e^{x}K_\nu(x)$ to avoid underflow; the resulting log-score
is numerically stable for all $(n,s)$ encountered in practice.

A score defined on parent sets must not depend on how those parents are
listed. The next proposition records this and the node-wise decomposability that
the informed sampler exploits.

\begin{proposition}[Order-invariance and decomposability]
\label{prop:scoreeq}
The scores \eqref{eq:scoreNG}--\eqref{eq:scoreNIG} are invariant to permutations
of the elements of $S$, and the DAG posterior \eqref{eq:dagpost} factorizes as a
product of node scores. Consequently any single-edge modification of
$\mathcal D$ changes at most two node scores, namely those of the head of the
added or removed edge and, for a reversal, of its tail.
\end{proposition}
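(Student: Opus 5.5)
The proof has three parts: permutation invariance, factorization, and locality of single-edge moves.

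\emph{Permutation invariance.} Both scores \eqref{eq:scoreNG} and \eqref{eq:scoreNIG} depend on $S$ only through $(Q_S,|M_S|)$, so it suffices to show these two quantities are invariant. Listing the elements of $S$ in a different order replaces $\widehat{\bm Z}_S$ by $\widehat{\bm Z}_S P$ for a permutation matrix $P$, which is orthogonal, $PP^\top=I_s$. Then
\[
M_S=I_n+\tau^2\widehat{\bm Z}_S PP^\top\widehat{\bm Z}_S^\top=I_n+\tau^2\widehat{\bm Z}_S\widehat{\bm Z}_S^\top
\]
is literally unchanged. Consequently both $|M_S|$ and $Q_S=\widehat{\bm z}_j^\top M_S^{-1}\widehat{\bm z}_j$ are unchanged. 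As a cross-check, the ridge form in \eqref{eq:Qdef} gives the same conclusion: $\Lambda_S\mapsto P^\top\Lambda_S P$, so $\Lambda_S^{-1}\mapsto P^\top\Lambda_S^{-1}P$, and the $P$ factors cancel in both $Q_S$ and $|\Lambda_S|$. A more intrinsic way to put it is that the integrand in \eqref{eq:scoredef} is invariant under $\bm\beta\mapsto P^\top\bm\beta$, because $N_s(\bm 0,\tau^2DI_s)$ is rotation invariant. Theorems~\ref{thm:scoreNG} and~\ref{thm:scoreNIG} are taken as given.

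\emph{Factorization.} This is essentially by construction. The structural equations \eqref{eq:sem} make the latent likelihood a product over $j$ of $N_n(\widehat{\bm z}_j;\widehat{\bm Z}_{\mathrm{pa}(j)}\bm\beta_j,D_{jj}I_n)$, and the priors \eqref{eq:betaprior}--\eqref{eq:nigprior} are independent across nodes. Fubini's theorem therefore lets the integral over $(L,D)$ split into the product of the node integrals \eqref{eq:scoredef}. Multiplying by $\pi(\mathcal D)$ gives \eqref{eq:dagpost}. The in-degree indicator and the edge prior also factor across nodes, since $|\mathcal D|=\sum_j|\mathrm{pa}(j)|$.

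\emph{Locality of single-edge moves.} Let $k\to j$ be the edge in question.
\begin{itemize}
\item Adding or removing $k\to j$ changes only $\mathrm{pa}(j)$. Every other factor $m^\bullet(i,\mathrm{pa}(i))$ with $i\ne j$ is identical, so only the score of the head $j$ changes.
\item Reversing $k\to j$ to $j\to k$ changes $\mathrm{pa}(j)$ (removing $k$) and $\mathrm{pa}(k)$ (adding $j$), and no other parent set.
\end{itemize}

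The one point needing care concerns the reversal: it may conflict with the fixed ordering assumed in Section~\ref{sec:model}. I would handle this by noting that the score formula \eqref{eq:scoreNG}--\eqref{eq:scoreNIG} is defined for any parent set $S\subseteq\{1,\dots,p\}\setminus\{j\}$ with $s<n$, independently of any ordering. The factorization argument only requires the graph to be acyclic, so that \eqref{eq:sem} defines a valid joint density; this holds after relabelling by any topological order of the new DAG. Permutation invariance guarantees that this relabelling does not affect any node score, which is why the first part of the statement supports the last. I regard this relabelling step as the only mild obstacle; the rest is bookkeeping.
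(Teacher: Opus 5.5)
Your proposal is correct and follows the paper's proof. You get invariance because $Q_S$ and $|M_S|$ are unchanged when $\widehat{\bm Z}_S$ becomes $\widehat{\bm Z}_S P$; your observation that $M_S$ itself is literally unchanged is a slightly more direct version of the paper's conjugation argument on $\Lambda_S$, which you also give. The factorization comes from the node-wise likelihood \eqref{eq:sem}, and locality holds because an addition or deletion changes only the head's parent set, while a reversal changes the parent sets of both endpoints. Your extra remarks on Fubini, on the in-degree indicator factoring across nodes, and on reconciling reversals with the fixed ordering are sound points that the paper leaves implicit.
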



\section{Posterior computation}
\label{sec:comp}

\subsection{Full conditionals}

Conditional on the DAG, the parameters $(\bm\beta_j,D_{jj})$ are updated
node-wise. The full conditionals are exact and require no Metropolis step.

\begin{theorem}[Full conditionals]
\label{thm:fc}
Fix node $j$ with parents $S$ of size $s$, and write
$R(\bm\beta)=\|\widehat{\bm z}_j-\widehat{\bm Z}_S\bm\beta\|^2
+\tau^{-2}\bm\beta^\top\bm\beta$. Under both models,
\begin{eqnarray}
\bm\beta_j\mid D_{jj},\widehat{\bm z}_j,S
\;\sim\; N_s\!\big(\widehat{\bm\mu}_S,\;D_{jj}\,\Lambda_S^{-1}\big),
\label{eq:fcbeta}
\end{eqnarray}
with $\Lambda_S,\widehat{\bm\mu}_S$ as in \eqref{eq:Mdef}. Under the NPN-NG model,
\begin{eqnarray}
D_{jj}\mid\bm\beta_j,\widehat{\bm z}_j,S
\;\sim\;\GIG\!\Big(\lambda=a-\tfrac{n+s}{2},\;\psi=2b,\;\chi=R(\bm\beta_j)\Big),
\label{eq:fcD-ng}
\end{eqnarray}
where $\GIG(\lambda,\psi,\chi)$ has density proportional to
$x^{\lambda-1}\exp\{-\tfrac12(\psi x+\chi/x)\}$ on $x>0$. Under the NPN-NIG
model,
\begin{eqnarray}
D_{jj}\mid\bm\beta_j,\widehat{\bm z}_j,S
\;\sim\;\IG\!\Big(\tfrac{n+s}{2}+a_0,\;\tfrac12 R(\bm\beta_j)+b_0\Big).
\label{eq:fcD-nig}
\end{eqnarray}
\end{theorem}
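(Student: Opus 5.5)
The proof is a direct computation with the joint density of $(\bm\beta_j,D_{jj})$ given $\widehat{\bm z}_j$ and $S$. By Bayes' rule this density is proportional to the product of the Gaussian likelihood, the coefficient prior \eqref{eq:betaprior}, and $\pi_\bullet(D)$. Writing $D=D_{jj}$, the first two factors combine to
\begin{eqnarray*}
D^{-n/2}\exp\Big\{-\tfrac{1}{2D}\|\widehat{\bm z}_j-\widehat{\bm Z}_S\bm\beta\|^2\Big\}\cdot(\tau^2D)^{-s/2}\exp\Big\{-\tfrac{1}{2\tau^2D}\bm\beta^\top\bm\beta\Big\}
\;\propto\; D^{-(n+s)/2}\exp\Big\{-\tfrac{R(\bm\beta)}{2D}\Big\}.
\end{eqnarray*}
Every conditional is read off this single kernel by isolating the factors that depend on the variable being updated.

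For \eqref{eq:fcbeta}, I fix $D$ and complete the square in $\bm\beta$. Expanding gives $R(\bm\beta)=\bm\beta^\top\Lambda_S\bm\beta-2\bm\beta^\top\widehat{\bm Z}_S^\top\widehat{\bm z}_j+\widehat{\bm z}_j^\top\widehat{\bm z}_j$, and this equals
\begin{eqnarray*}
(\bm\beta-\widehat{\bm\mu}_S)^\top\Lambda_S(\bm\beta-\widehat{\bm\mu}_S)+Q_S,
\end{eqnarray*}
where the constant is the second form of \eqref{eq:Qdef}. The conditional kernel in $\bm\beta$ is therefore $\exp\{-(\bm\beta-\widehat{\bm\mu}_S)^\top\Lambda_S(\bm\beta-\widehat{\bm\mu}_S)/(2D)\}$, which is the stated normal law. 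The variance prior on $D$ plays no role in this step, so the result holds under both models.

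For the variance, I fix $\bm\beta$ and multiply the kernel by each prior in turn. With the Gamma prior \eqref{eq:ngprior} the result is $D^{a-1-(n+s)/2}\exp\{-bD-R(\bm\beta)/(2D)\}$. Matching this to $x^{\lambda-1}\exp\{-\tfrac12(\psi x+\chi/x)\}$ gives $\lambda=a-(n+s)/2$, $\psi=2b$ and $\chi=R(\bm\beta_j)$, which is \eqref{eq:fcD-ng}. With the Inverse-Gamma prior \eqref{eq:nigprior} the result is $D^{-((n+s)/2+a_0)-1}\exp\{-(R/2+b_0)/D\}$, the $\IG$ kernel in \eqref{eq:fcD-nig}.

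The main point to verify is that both laws are proper, since the identification above is only up to normalization. The GIG density requires $\psi>0$ and $\chi>0$, and it places no sign restriction on $\lambda$. We have $\psi=2b>0$, and $R(\bm\beta)\ge\tau^{-2}\|\bm\beta\|^2$, so $\chi>0$ whenever $\bm\beta\ne\bm 0$. This holds almost surely under the Gaussian draw \eqref{eq:fcbeta}. When $s=0$, or on the null set $\bm\beta=\bm 0$, we have $R=\|\widehat{\bm z}_j\|^2>0$ almost surely. The $\IG$ parameters are positive because $a_0,b_0>0$.
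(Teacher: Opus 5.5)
Your proposal is correct and follows the paper's proof essentially step for step. You combine the likelihood and the conditional coefficient prior into $D^{-(n+s)/2}\exp\{-R(\bm\beta)/(2D)\}$, complete the square in $\bm\beta$ with $\Lambda_S$, and read off the GIG and IG kernels after multiplying by the respective variance priors. Your added properness check is a small bonus the paper omits, and it can be simplified: your completed-square identity already gives $R(\bm\beta)=(\bm\beta-\widehat{\bm\mu}_S)^\top\Lambda_S(\bm\beta-\widehat{\bm\mu}_S)+Q_S\ge Q_S>0$ for every $\bm\beta$, so $\chi>0$ holds deterministically and the almost-sure argument is unnecessary.
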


The generalized inverse Gaussian draw in \eqref{eq:fcD-ng} is produced by the
uniformly fast ratio-of-uniforms generator of \citet{hormann2014}, implemented
in the \textsf{R} package \texttt{GIGrvg} \citep{leydold2015}, at a cost
independent of $n$. 

\subsection{A locally-balanced informed sampler over DAG space}

Because Theorems~\ref{thm:scoreNG} and~\ref{thm:scoreNIG} make every neighbour of
the current DAG cheap to score, we can replace the usual uniform random-walk
proposal by an informed one. Let $\mathcal N(\mathcal D)$ be the set of DAGs
reachable from $\mathcal D$ by adding, deleting, or reversing a single edge while
preserving acyclicity, and write the unnormalized posterior \eqref{eq:dagpost} as
$\gamma(\mathcal D)$. Following the locally-balanced framework of
\citet{zanella2020}, fix a balancing function $g:(0,\infty)\to(0,\infty)$ with
$g(t)=t\,g(1/t)$; we use $g(t)=\sqrt t$, which is the choice shown by
\citet{zanella2020} to be asymptotically optimal in the Peskun sense. The
informed proposal draws $\mathcal D'\in\mathcal N(\mathcal D)$ with probability
\begin{eqnarray}
q(\mathcal D'\mid\mathcal D)=\frac{g\{\gamma(\mathcal D')/\gamma(\mathcal D)\}}
{Z_g(\mathcal D)},\qquad
Z_g(\mathcal D)=\sum_{\mathcal D''\in\mathcal N(\mathcal D)}
g\{\gamma(\mathcal D'')/\gamma(\mathcal D)\},
\label{eq:informedprop}
\end{eqnarray}
and accepts it with the Metropolis--Hastings probability $\alpha(\mathcal D,
\mathcal D')=\min\{1,\,r\}$. The structure of \eqref{eq:informedprop} simplifies
the acceptance ratio dramatically.

\begin{proposition}[Acceptance ratio and reversibility]
\label{prop:informed}
For the balancing function $g(t)=\sqrt t$ the Metropolis--Hastings acceptance
ratio of the proposal \eqref{eq:informedprop} is
\begin{eqnarray}
r=\frac{\gamma(\mathcal D')\,q(\mathcal D\mid\mathcal D')}
{\gamma(\mathcal D)\,q(\mathcal D'\mid\mathcal D)}
=\frac{Z_g(\mathcal D)}{Z_g(\mathcal D')},
\label{eq:accept}
\end{eqnarray}
a ratio of neighbourhood normalizers that requires no further score evaluation
beyond those already cached. The resulting chain is reversible with respect to
$\pi(\cdot\mid\widehat{\bm Z})$ and, within the class of pointwise-informed
proposals supported on $\mathcal N(\mathcal D)$, the choice $g(t)=\sqrt t$
minimizes the asymptotic variance of every $\pi$-square-integrable ergodic
average.
\end{proposition}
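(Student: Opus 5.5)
The plan has three parts: an algebraic identity for the acceptance ratio, reversibility, and an appeal to Peskun ordering for optimality.

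\emph{Acceptance ratio.} First I will check that the neighbourhood relation is symmetric. If $\mathcal D'$ is obtained from $\mathcal D$ by adding, deleting or reversing one edge while staying acyclic, then $\mathcal D$ is obtained from $\mathcal D'$ by the inverse operation: delete, add or reverse that same edge. This inverse also preserves acyclicity, because $\mathcal D$ is a DAG. The in-degree cap $s_{\max}$ is likewise respected both ways, since both endpoints lie in the support. Hence $\mathcal D'\in\mathcal N(\mathcal D)$ if and only if $\mathcal D\in\mathcal N(\mathcal D')$, and $q(\mathcal D\mid\mathcal D')>0$ whenever $q(\mathcal D'\mid\mathcal D)>0$.

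Next I substitute \eqref{eq:informedprop} into the Metropolis--Hastings ratio. With $t=\gamma(\mathcal D')/\gamma(\mathcal D)$ and $g(t)=\sqrt t$,
\[
r=t\cdot\frac{g(1/t)/Z_g(\mathcal D')}{g(t)/Z_g(\mathcal D)}
=t\cdot t^{-1}\cdot\frac{Z_g(\mathcal D)}{Z_g(\mathcal D')}.
\]
Here I use the balancing identity $g(t)=t\,g(1/t)$, so that $t\,g(1/t)/g(t)=1$. This gives \eqref{eq:accept}.

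To justify that no extra score evaluation is needed, I will invoke Proposition~\ref{prop:scoreeq}. Every ratio $\gamma(\mathcal D'')/\gamma(\mathcal D)$ is a product of at most two node-score ratios and a prior factor $\{\theta/(1-\theta)\}^{\pm1}$ (or $1$ for a reversal). Computing $Z_g(\mathcal D)$ therefore already requires all such scores, which are cached. I will also note that the same cache yields $Z_g(\mathcal D')$ once the scores affected by the move are refreshed.

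\emph{Reversibility.} The kernel is $P(\mathcal D,\mathcal D')=q(\mathcal D'\mid\mathcal D)\min\{1,r\}$ for $\mathcal D'\neq\mathcal D$. Detailed balance $\gamma(\mathcal D)P(\mathcal D,\mathcal D')=\gamma(\mathcal D')P(\mathcal D',\mathcal D)$ is the standard Metropolis--Hastings identity, using $\min\{1,r\}/\min\{1,1/r\}=r$. Because $\pi(\cdot\mid\widehat{\bm Z})\propto\gamma$ by \eqref{eq:dagpost}, this gives reversibility with respect to the posterior.

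\emph{Optimality.} For the optimality claim I will rely on \citet{zanella2020} rather than reprove it. The argument there runs as follows. Compare informed kernels $P_g$ built from different $g$ with a common base neighbourhood. Peskun ordering $P_{g_1}(x,y)\ge P_{g_2}(x,y)$ for all $x\neq y$ implies a smaller asymptotic variance for every $L^2(\pi)$ ergodic average. Balancing functions make the MH correction asymptotically negligible as neighbourhoods grow. Within the balanced class, the Barker choice and the $\sqrt t$ choice are the two canonical ones, and their optimality is stated in \citet{zanella2020} in the asymptotic sense.

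This is the step I expect to be the main obstacle. The claim as worded, ``minimizes the asymptotic variance of every ergodic average'' for finite neighbourhoods, is stronger than a finite-sample Peskun result actually delivers. I would therefore state it in Zanella's asymptotic, large-neighbourhood sense, restricted to balancing functions, and cite his Theorem on Peskun optimality rather than attempt a direct finite-state proof.
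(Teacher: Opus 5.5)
Your argument for the acceptance ratio and for reversibility is the same as the paper's. You use symmetry of the add/delete/reverse neighbourhood, cancellation through $g(t)=t\,g(1/t)$ to get $r=Z_g(\mathcal D)/Z_g(\mathcal D')$, and the standard detailed-balance check. Like the paper, you defer the optimality clause to \citet{zanella2020}; the paper's proof also gives nothing beyond a one-line citation there.

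Your reservation about that clause is justified, and weakening it is in fact forced, because as worded it is false. For a balancing $g$ the kernel is $P_g(x,y)=g\{\gamma(y)/\gamma(x)\}/\max\{Z_g(x),Z_g(y)\}$. On two nodes, the DAGs $E=\emptyset$, $A=(1\to2)$ and $B=(2\to1)$ are mutually adjacent under add/delete/reverse; give them posterior masses $\pi_E>\delta^2=\pi_A=\pi_B$. By the $A\leftrightarrow B$ symmetry, $f=\mathbf 1_A-\mathbf 1_B$ is an eigenfunction of each kernel, with eigenvalue $\lambda_g=P_g(A,A)-P_g(A,B)$. This equals $-\delta/(\delta+\sqrt{\pi_E})$ for $g(t)=\sqrt t$, and $-1/2$ for the balancing function $g(t)=\min\{1,t\}$. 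The asymptotic variances $\|f\|_\pi^2(1+\lambda_g)/(1-\lambda_g)$ are therefore $\|f\|_\pi^2\sqrt{\pi_E}/(\sqrt{\pi_E}+2\delta)$ and $\|f\|_\pi^2/3$, and the first is strictly larger because $\sqrt{\pi_E}>\delta$. So $\sqrt t$ is not even the best balancing function at finite neighbourhood size, let alone the best pointwise-informed proposal.

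\citet{zanella2020}'s abstract describes the optimality result as Peskun-optimality of locally balanced proposals as a class, in high-dimensional regimes. Your restatement should therefore be class-level in that sense, not a ranking that singles out $\sqrt t$ among balancing functions. Your parenthetical ``restricted to balancing functions'' should be read in that class-level way.
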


By Proposition~\ref{prop:scoreeq} a single-edge move alters at most two node
scores, so the neighbourhood scores can be maintained incrementally: after a move
into $\mathcal D'$, only the $O(p)$ entries of $\mathcal N(\mathcal D')$ that
involve the two affected nodes need recomputation. The per-iteration cost is
therefore $O(p\cdot ns_{\max}^2)$, where $s_{\max}$ bounds the in-degree, rather
than the $O(p^2\cdot ns_{\max}^2)$ of a naive rescoring. Algorithm~\ref{alg:mcmc}
collects the scheme.

\begin{algorithm}[Informed nonparanormal DAG sampler]
\label{alg:mcmc}
\emph{Input}: transformed data $\widehat{\bm Z}$, hyperparameters, balancing
function $g(t)=\sqrt t$. \emph{Initialize} $\mathcal D^{(0)}$ and cache all node
scores. \emph{For} $t=1,2,\dots$: \emph{(i)} for every
$\mathcal D''\in\mathcal N(\mathcal D^{(t-1)})$ form $\gamma(\mathcal D'')$ from
the cached scores, recomputing only the two scores changed by the move;
\emph{(ii)} compute $Z_g(\mathcal D^{(t-1)})$ and draw $\mathcal D'$ from
\eqref{eq:informedprop}; \emph{(iii)} compute $Z_g(\mathcal D')$ and accept
$\mathcal D^{(t)}=\mathcal D'$ with probability $\min\{1,Z_g(\mathcal D^{(t-1)})/
Z_g(\mathcal D')\}$, else $\mathcal D^{(t)}=\mathcal D^{(t-1)}$; \emph{(iv)} given
$\mathcal D^{(t)}$, refresh $(\bm\beta_j,D_{jj})$ for the affected nodes by
\eqref{eq:fcbeta}--\eqref{eq:fcD-nig}. \emph{Output}: posterior samples of
$\mathcal D$ and of $(L,D)$.
\end{algorithm}

Figure~\ref{fig:informed} shows the gain: across $p$ from $8$ to $20$ the
informed sampler delivers between two and nine times more effective samples
per second than the random-walk sampler, and its log-posterior autocorrelations
decay several times faster.

\begin{figure}[t]
\centering
\includegraphics[width=0.86\textwidth]{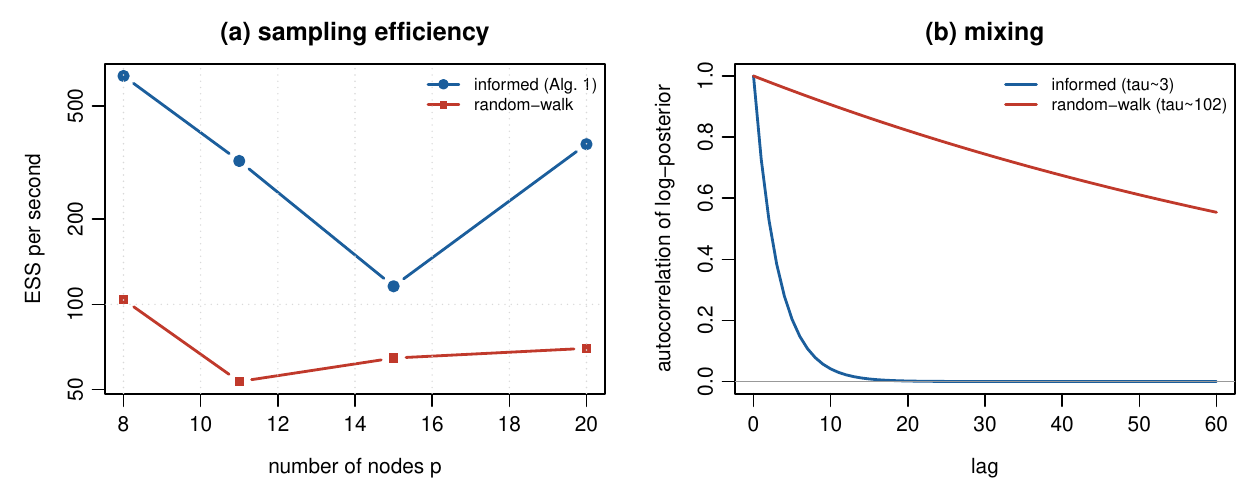}
\caption{Informed versus random-walk sampling. (a) Effective sample size per
second against the number of nodes. (b) Autocorrelation of the log-posterior,
with integrated autocorrelation times averaged over the tested node counts.}
\label{fig:informed}
\end{figure}

\subsection{A Bessel-free, scalable score}
\label{sec:bessfree}

The order of the Bessel function in \eqref{eq:scoreNG} is $\nu=a-n/2$, which is
large in magnitude when $n$ is large, while its argument
$x=\sqrt{2bQ_S}=O(\sqrt n)$ grows more slowly. This is precisely the regime of
the uniform asymptotic expansion of \citet[\S10.41]{dlmf} for Bessel functions
of large order. Writing $\mu=n/2-a>0$ and $K_{a-n/2}=K_{-\mu}=K_{\mu}$, set
$w=x/\mu$, $t=(1+w^2)^{-1/2}$, and $\eta=t^{-1}+\log\{w/(1+t^{-1})\}$. The
expansion gives
\begin{eqnarray}
K_{\mu}(x)=\sqrt{\frac{\pi}{2\mu}}\,\frac{e^{-\mu\eta}}{(1+w^2)^{1/4}}
\left(1-\frac{u_1(t)}{\mu}+\frac{u_2(t)}{\mu^2}-\cdots\right),
\label{eq:uniform}
\end{eqnarray}
where $u_1,u_2$ are the explicit polynomials of \citet[\S10.41]{dlmf}, reproduced
in Appendix~\ref{app:bessel}. Truncating after $u_1$ defines a Bessel-free score
$\widetilde m^{\mathrm{NG}}$.

\begin{proposition}[Accuracy and stability of the Bessel-free score]
\label{prop:bessfree}
The truncated expansion \eqref{eq:uniform} has relative error
$O(\mu^{-2})=O(n^{-2})$, uniformly over $x$ in any fixed interval $[0,c\sqrt n]$:
there is a constant $C=C(a,b,c)$ such that
\begin{eqnarray}
\left|\frac{\widetilde m^{\mathrm{NG}}(j,S)}{m^{\mathrm{NG}}(j,S)}-1\right|
\le \frac{C}{n^{2}}\qquad\text{for all }Q_S\le c\,n .
\label{eq:relerr}
\end{eqnarray}
Consequently the stationary distribution $\widetilde\pi$ of
Algorithm~\ref{alg:mcmc} run with $\widetilde m^{\mathrm{NG}}$ in place of
$m^{\mathrm{NG}}$ satisfies the total-variation bound
\begin{eqnarray}
\big\|\widetilde\pi(\cdot\mid\widehat{\bm Z})
-\pi(\cdot\mid\widehat{\bm Z})\big\|_{\mathrm{TV}}
\le \tfrac12\big(e^{2Cp/n^{2}}-1\big),
\label{eq:tv}
\end{eqnarray}
where $p$ is the number of nodes; the right-hand side is $Cp/n^{2}+o(p^2/n^4)$,
and hence negligible, whenever $p=o(n^{2})$.
\end{proposition}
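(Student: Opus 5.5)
The proof has two parts: a pointwise relative-error bound for the score, and a perturbation bound for the stationary law. The only place the two scores differ is the Bessel factor: $\widetilde m^{\mathrm{NG}}(j,S)/m^{\mathrm{NG}}(j,S)=\widetilde K_\mu(x)/K_\mu(x)$ with $x=\sqrt{2bQ_S}$ and $\mu=n/2-a$. Here $\widetilde K_\mu$ denotes \eqref{eq:uniform} truncated after $u_1$, and $|M_S|$, $b^a/\Gamma(a)$ and the power of $Q_S$ cancel. So \eqref{eq:relerr} reduces to
\begin{equation*}
\sup_{x\in[0,c\sqrt n]}\bigl|\widetilde K_\mu(x)/K_\mu(x)-1\bigr|\le C/\mu^{2}.
\end{equation*}
This holds because $Q_S\le cn$ gives $x\le\sqrt{2bc}\,\sqrt n$. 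The constant is then renamed, and $\mu\asymp n/2$ gives $O(n^{-2})$.

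For the Bessel bound I would invoke the DLMF \S10.41 result in its error-bounded form (Olver's theorem). The expansion $K_\mu(\mu z)=\sqrt{\pi/(2\mu)}\,e^{-\mu\eta}(1+z^2)^{-1/4}\{\sum_{k<m}(-1)^k u_k(t)\mu^{-k}+\varepsilon_m(\mu,z)\}$ holds with $|\varepsilon_m|\le 2\exp\{2\mathcal V(u_1)/\mu\}\,\mathcal V(u_m)/\mu^{m}$. The variations $\mathcal V$ are taken over $t\in[0,1]$, and these are finite constants because the $u_k$ are polynomials on a compact interval.

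Taking $m=2$, the error after the $u_1$ term is at most $C'\mu^{-2}$ uniformly in $z>0$. Hence $w=x/\mu\in[0,O(n^{-1/2})]$ is in particular covered; the uniformity is even stronger than needed. The relative error follows by dividing by the full bracket $1-u_1(t)/\mu+O(\mu^{-2})$, which is bounded below by $1/2$ once $\mu$ exceeds a threshold depending on $\sup_{[0,1]}|u_1|$. Finitely many small $n$ are absorbed into $C$. The endpoint $x=0$ is excluded by $Q_S>0$ and is harmless in the limit. The main care is to note that the analogue of Olver's bound is needed for real positive $z$ only, which is the classical case.

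For \eqref{eq:tv}, write $\widetilde\gamma(\mathcal D)=\gamma(\mathcal D)\rho(\mathcal D)$ with $\rho=\prod_{j=1}^p\widetilde m^{\mathrm{NG}}/m^{\mathrm{NG}}$. By Proposition~\ref{prop:informed} the modified chain is reversible with respect to $\widetilde\pi\propto\widetilde\gamma$, since the acceptance-ratio argument uses only the form of $\gamma$. By \eqref{eq:relerr}, $|\log\rho|\le p\cdot 2C/n^2=:\epsilon$ for $n$ large, using $|\log(1+u)|\le 2|u|$ for small $u$. This requires the hypothesis $Q_S\le cn$ for all parent sets in the support. That holds if $c$ is chosen with $\widehat{\bm z}_j^\top\widehat{\bm z}_j\le cn$, since $Q_S\le\|\widehat{\bm z}_j\|^2\le n\,\{\Phi^{-1}(1-\delta_n)\}^2$; at worst one fixes $c$ via this deterministic bound, and I regard making $c$ legitimate as the main delicate point.

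Then $\widetilde\pi(\mathcal D)/\pi(\mathcal D)=\rho(\mathcal D)/\E_\pi\rho\in[e^{-2\epsilon},e^{2\epsilon}]$. Hence
\begin{equation*}
\|\widetilde\pi-\pi\|_{\mathrm{TV}}=\tfrac12\sum|\widetilde\pi-\pi|\le\tfrac12\sup|\widetilde\pi/\pi-1|\le\tfrac12(e^{2\epsilon}-1).
\end{equation*}
The stated form follows after absorbing constants into $C$. Expanding the exponential gives the $Cp/n^2+o(p^2/n^4)$ statement and negligibility when $p=o(n^2)$.
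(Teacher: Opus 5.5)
Your proof is correct and follows the paper's route: cancel everything except $\widetilde K_\mu/K_\mu$, bound the remainder after $u_1$ by $O(\mu^{-2})$ via DLMF \S10.41 and divide by a bracket bounded below, then sandwich $\widetilde\pi/\pi$ in $[e^{-2\epsilon},e^{2\epsilon}]$. Two details are improvements on the paper: Olver's explicit variation bound gives a rigorous remainder that is uniform over all $z>0$ (so $C$ need not depend on $c$), and your $|\log(1+u)|\le 2|u|$ correctly handles the lower side that the paper's $\log(1+C/n^2)$ step skips. That uniformity also makes your ``delicate point'' unnecessary, which is fortunate, because $n\{\Phi^{-1}(1-\delta_n)\}^2\asymp n\log n$ does not give a fixed $c$; note too that $\tfrac12(e^{2\eta}-1)=\eta+\eta^2+\cdots$ with $\eta=Cp/n^2$, so the remainder is $O(p^2/n^4)$ rather than $o(p^2/n^4)$, a slip inherited from the statement that is harmless when $p=o(n^2)$.
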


The bound \eqref{eq:tv} shows that for the sample sizes typical of genomic and
proteomic studies the Bessel-free score is, for inferential purposes,
indistinguishable from the exact one, while removing the only special-function
call from the inner loop. Figure~\ref{fig:bessel} confirms the $O(n^{-2})$ rate
empirically. 

\begin{figure}[t]
\centering
\includegraphics[width=0.62\textwidth]{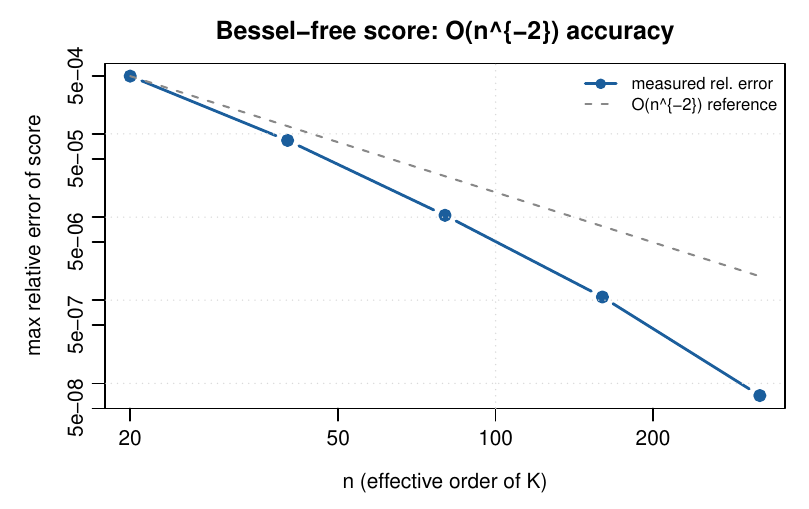}
\caption{Maximum relative error of the Bessel-free score against the effective
order $n$, on log--log axes, with the $O(n^{-2})$ reference line.}
\label{fig:bessel}
\end{figure}

\section{Theoretical results}
\label{sec:theory}

Throughout, for a parent set $S$ we regard the
node score as a function of the residual quadratic form $Q$ through the
prior-dependent factor
\begin{eqnarray}
h_{\bullet}(Q)=\log\int_0^\infty (2\pi D)^{-n/2}\,e^{-Q/(2D)}\,\pi_{\bullet}(D)\,dD,
\label{eq:hdef}
\end{eqnarray}
so that $m^{\bullet}(j,S)=|M_S|^{-1/2}\exp\{h_{\bullet}(Q_S)\}$ up to a factor not
depending on the variance prior.

\subsection{Differential shrinkage}

The two priors differ in how readily they admit a weakly supported edge. The
key object is the posterior mean of the innovation precision $D^{-1}$ given the
residual energy $Q$, because, as Lemma~\ref{lem:precision} shows,
$h_{\bullet}'(Q)=-\tfrac12\,\E_{\bullet}[D^{-1}\mid Q]$. Adding an edge lowers
$Q$ by some $\Delta=Q_S-Q_{S\cup\{k\}}>0$, and the log-Bayes-factor for the edge
is $\tfrac12\int_{Q_{S\cup\{k\}}}^{Q_S}\E_{\bullet}[D^{-1}\mid q]\,dq$ plus a
prior-free determinant term common to both models. Whichever prior assigns the
smaller posterior precision therefore rewards the edge less.

\begin{theorem}[Precision-posterior shrinkage comparison]
\label{thm:shrinkage}
Let $\omega=\sqrt{2bQ}$. The posterior precisions are
\begin{eqnarray}
\E_{\mathrm{NG}}[D^{-1}\mid Q]
=\sqrt{\tfrac{2b}{Q}}\;\frac{K_{a-n/2-1}(\omega)}{K_{a-n/2}(\omega)},
\qquad
\E_{\mathrm{NIG}}[D^{-1}\mid Q]=\frac{n+2a_0}{Q+2b_0}.
\label{eq:precmeans}
\end{eqnarray}
Suppose the data satisfy $Q/n\to\sigma^2\in(0,\infty)$ as $n\to\infty$, with
$a,b,a_0,b_0$ fixed. Then, retaining the first correction to the large-order
Bessel ratio, $\E_{\mathrm{NG}}[D^{-1}\mid Q]=(n-2a)/Q+2b/(n-2a)+O(n^{-2})$, and
\begin{eqnarray}
\E_{\mathrm{NG}}[D^{-1}\mid Q]-\E_{\mathrm{NIG}}[D^{-1}\mid Q]
=\frac{2\,\varphi(\sigma^2)}{n\,\sigma^4}+O(n^{-2}),
\qquad
\varphi(v)=b\,v^2-(a+a_0)\,v+b_0,
\label{eq:precdiff}
\end{eqnarray}
an upward parabola in $v=\sigma^2$ whose sign governs the comparison for all
large $n$. Consequently: if $(a+a_0)^2\le 4bb_0$ then $\varphi\ge0$ everywhere
and the Gamma prior assigns the larger posterior precision at every $\sigma^2$,
so NPN-NIG shrinks at least as strongly as NPN-NG; whereas if
$(a+a_0)^2>4bb_0$ then $\varphi<0$ precisely on the interval $(v_-,v_+)$ with
$v_\pm=\{(a+a_0)\pm\sqrt{(a+a_0)^2-4bb_0}\}/(2b)$, and there NPN-NG assigns the
strictly smaller posterior precision and hence the strictly smaller edge Bayes
factor: weakly supported edges whose post-inclusion residual variance lies in
$(v_-,v_+)$ are shrunk more aggressively under NPN-NG.
\end{theorem}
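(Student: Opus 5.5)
The plan is to compute both posterior precisions exactly from the posterior law of $D$ given $Q$, and then expand them for large $n$.

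\emph{Exact forms.} In \eqref{eq:hdef} the integrand in $D$ is $D^{-n/2}e^{-Q/(2D)}\pi_\bullet(D)$. Under \eqref{eq:ngprior} this is proportional to $D^{a-n/2-1}\exp\{-\tfrac12(2bD+Q/D)\}$, so $D\mid Q\sim\GIG(a-n/2,\,2b,\,Q)$. For $\GIG(\lambda,\psi,\chi)$ the standard normalizing constant $2K_\lambda(\sqrt{\psi\chi})(\chi/\psi)^{\lambda/2}$ gives
\[
\E[X^{-1}]=\sqrt{\psi/\chi}\,K_{\lambda-1}(\sqrt{\psi\chi})/K_\lambda(\sqrt{\psi\chi}).
\]
With $\lambda=a-n/2$ this is the first formula in \eqref{eq:precmeans}. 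Under \eqref{eq:nigprior} the posterior is $\IG(n/2+a_0,\,Q/2+b_0)$, and $\E[D^{-1}\mid Q]$ is shape over rate, i.e.\ $(n+2a_0)/(Q+2b_0)$. This is the same computation that produced Theorems~\ref{thm:scoreNG}--\ref{thm:scoreNIG}. Differentiating $h_\bullet$ under the integral gives Lemma~\ref{lem:precision}, which justifies reading these means as edge-reward rates.

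\emph{Bessel ratio.} Put $\mu=n/2-a$ and $x=\omega=\sqrt{2bQ}$. Since $K_{-\nu}=K_\nu$, the ratio in question is $K_{\mu+1}(x)/K_\mu(x)$. I would use the uniform expansion \eqref{eq:uniform} for both orders $\mu$ and $\mu+1$, or equivalently the Debye-type expansion
\[
\frac{K_{\mu+1}(x)}{K_\mu(x)}=\frac{\mu+\sqrt{\mu^2+x^2}}{x}+O\!\Big(\frac{1}{x}\Big)\cdot\frac{1}{\mu},
\]
which can also be obtained from the recurrence $K_{\mu+1}=K_{\mu-1}+(2\mu/x)K_\mu$ together with Amos-type two-sided bounds on the ratio. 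In the regime $x^2=2bQ=O(n)$ and $\mu\asymp n/2$ we have $x^2/\mu^2=O(1/n)$, so
\[
\sqrt{\mu^2+x^2}=\mu+\frac{x^2}{2\mu}+O(n^{-2}).
\]
Multiplying by $\sqrt{2b/Q}=x/Q$ then gives
\[
\E_{\mathrm{NG}}[D^{-1}\mid Q]=\frac{2\mu}{Q}+\frac{x^2}{2\mu Q}+O(n^{-2})=\frac{n-2a}{Q}+\frac{2b}{n-2a}+O(n^{-2}),
\]
which is the claimed expansion.

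\emph{Main obstacle.} The hardest step is making this remainder rigorously $O(n^{-2})$ uniformly for $Q$ in a neighbourhood of $n\sigma^2$. Two pieces are needed. First, the $1/\mu$ corrections $u_1(t)$ must cancel between orders $\mu$ and $\mu+1$ to the required order. Second, that uniformity is also what is needed to integrate the comparison over $q\in[Q_{S\cup\{k\}},Q_S]$ in the Bayes-factor statement. I would try the recurrence/bounds route first, because the sandwich bounds of Amos type pin the ratio to within $O(1/(\mu x))\cdot x/Q=O(n^{-2})$ directly.

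\emph{Difference and discriminant.} Combine over a common denominator:
\[
\frac{n-2a}{Q}-\frac{n+2a_0}{Q+2b_0}=\frac{2b_0n-2(a+a_0)Q-4ab_0}{Q(Q+2b_0)}.
\]
Substituting $Q=n\sigma^2$ gives $\{2b_0-2(a+a_0)\sigma^2\}/(n\sigma^4)+O(n^{-2})$. Adding $2b/(n-2a)=2b/n+O(n^{-2})=2b\sigma^4/(n\sigma^4)+O(n^{-2})$ yields \eqref{eq:precdiff}.

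Here I would interpret $\sigma^2$ as $Q/n$ itself, or assume $Q/n-\sigma^2=O(1/n)$; otherwise the remainder is only $o(1/n)$. The sign conclusion is unaffected whenever $\varphi(\sigma^2)\neq0$.

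The final dichotomy is elementary. $\varphi$ is a parabola with positive leading coefficient $b$ and discriminant $(a+a_0)^2-4bb_0$. If the discriminant is $\le0$, then $\varphi\ge0$ everywhere. Otherwise $\varphi<0$ exactly between the roots $v_\pm$. Feeding the sign into the integral representation of the log-Bayes-factor, where the determinant term is common to both models, transfers the precision ordering to the edge Bayes factors.
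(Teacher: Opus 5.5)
Your proposal is correct and follows essentially the paper's argument: exact $\GIG$/$\IG$ posterior means, the recurrence $T_\mu=2\mu/\omega+1/T_{\mu-1}$ for $T_\mu=K_{\mu+1}(\omega)/K_\mu(\omega)$ to capture the $b/\mu=2b/(n-2a)$ correction, subtraction at a common $Q$, and the sign of $\varphi$ from its discriminant. Your caveat that the remainder is only $o(1/n)$ unless $Q/n-\sigma^2=O(1/n)$ matches the paper's own $\{1+o(1)\}$ passage from $\sigma_n^2$ to $\sigma^2$.

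Two corrections to your justification:
\begin{itemize}
\item The $O(1/(\mu\omega))$ accuracy comes from the recurrence, not from a sandwich applied directly to $T_\mu$. A classical two-sided Amos-type sandwich for $K_{\nu+1}/K_\nu$ has width of order $1/\omega$ here, which leaves an $O(1/n)$ error in the precision, the very order being resolved. Apply bounds only to the small term $1/T_{\mu-1}=O(\omega/\mu)$, where relative accuracy $O(1/\mu)$ suffices. As in the paper, positivity plus one iteration already gives this: $2(\mu-1)/\omega\le T_{\mu-1}\le 2(\mu-1)/\omega+\omega/\{2(\mu-2)\}$.
\item The remainder in $\sqrt{\mu^2+\omega^2}=\mu+\omega^2/(2\mu)+\cdots$ is $O(n^{-1})$, not $O(n^{-2})$. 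This is harmless, since it becomes $O(n^{-2})$ after dividing by $Q$.
\end{itemize}
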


The two priors are therefore not uniformly ordered: which one shrinks a weak
edge more depends, through the quadratic $\varphi$, on the post-inclusion
residual variance $\sigma^2$ and on the hyperparameters. The practically useful
regime is $(a+a_0)^2>4bb_0$, in which NPN-NG is the more conservative prior on
the whole band $(v_-,v_+)$; choosing the hyperparameters so that the typical
weak-edge residual variance falls inside this band tunes the false-discovery
behaviour directly, and widening the band (larger $v_+$) makes NPN-NG conservative
over a broader range. Under the symmetric default $a=b=a_0=b_0$ the discriminant
vanishes, $\varphi(v)=b(v-1)^2\ge0$, and the two priors coincide to leading order,
with NPN-NIG never less aggressive---a useful sanity check. Figure~\ref{fig:shrinkage}
illustrates the conservative regime.

\begin{figure}[t]
\centering
\includegraphics[width=0.6\textwidth]{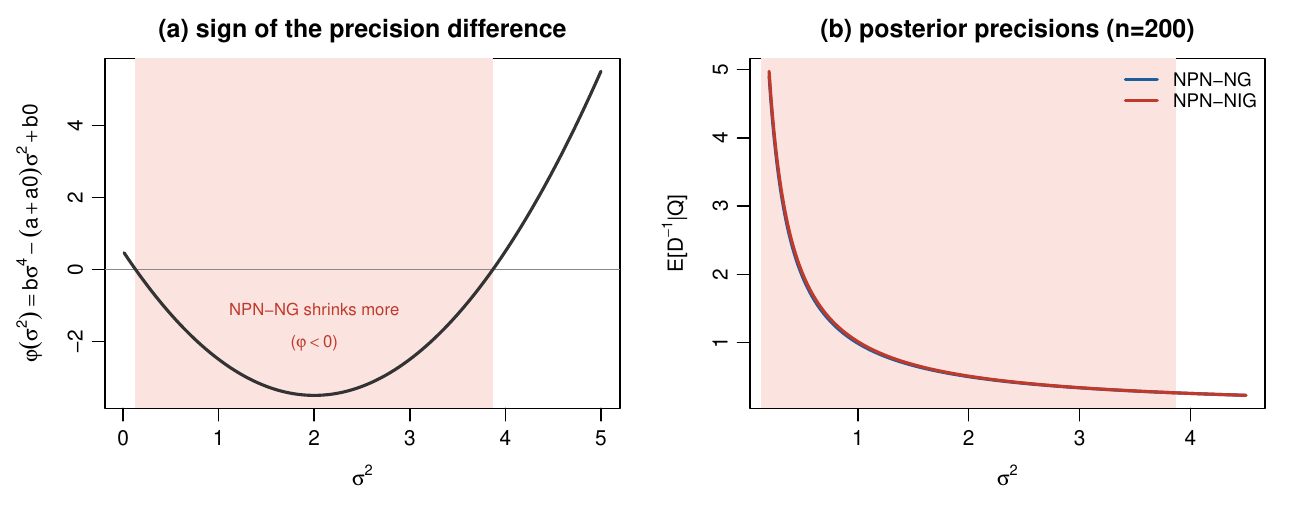}
\caption{Differential shrinkage in the conservative regime $(a+a_0)^2>4bb_0$ of
Theorem~\ref{thm:shrinkage} ($a=2,b=1,a_0=2,b_0=0.5$). (a) The quadratic
$\varphi(\sigma^2)$; on the shaded band $(v_-,v_+)$ where $\varphi<0$ the Gamma
prior assigns the smaller posterior precision. (b) The two exact posterior
precisions $\E[D^{-1}\mid Q]$ at $n=200$: NPN-NG lies below NPN-NIG exactly on the
band, confirming \eqref{eq:precdiff}.}
\label{fig:shrinkage}
\end{figure}

\subsection{Propagation of the nonparanormal transform error}

All of Sections~\ref{sec:scores}--\ref{sec:comp} treat the normal scores
$\widehat{\bm Z}$ as if they were the latent Gaussian variables $\bm Z$. The
following result bounds the cost of that substitution. Let
$\widehat\Sigma=n^{-1}\widehat{\bm Z}^\top\widehat{\bm Z}$ be the
nonparanormal sample covariance and $\Sigma$ the latent correlation matrix. We
use the established deviation bound for the Winsorized normal-score estimator.

\begin{condition}[Nonparanormal deviation]
\label{cond:npn}
There are constants $c_1,c_2>0$ such that, for all $n$ large enough,
\begin{eqnarray}
\Prob\!\left(\max_{1\le j,k\le p}|\widehat\Sigma_{jk}-\Sigma_{jk}|
> c_1\sqrt{\tfrac{\log p}{n}}\right)\le c_2\,p^{-1}.
\label{eq:npndev}
\end{eqnarray}
\end{condition}

Condition~\ref{cond:npn} holds for the estimator \eqref{eq:nsc} under the moment
and bandwidth conditions of \citet[Theorem~4.2]{liu2009} and
\citet[Theorem~4.1]{liu2012}; we state it as a condition so that the propagation
argument is modular and the constants are explicit.

\begin{theorem}[Score-perturbation propagation]
\label{thm:propagation}
Assume Condition~\ref{cond:npn} and that the eigenvalues of every leading
principal submatrix of $\Sigma$ of order at most $s_{\max}+1$ lie in
$[\kappa^{-1},\kappa]$ for a fixed $\kappa\ge1$. Let $h^{\widehat{\bm Z}}_{\bullet}$
and $h^{\bm Z}_{\bullet}$ denote the score component \eqref{eq:hdef} evaluated at
the residual quadratic forms computed from $\widehat{\bm Z}$ and from $\bm Z$
respectively. Then there is a constant $C_\star=C_\star(\kappa,\tau^2,a,b,a_0,b_0)$
such that, on the event in \eqref{eq:npndev},
\begin{eqnarray}
\max_{j}\;\max_{|S|\le s_{\max}}\;
\frac1n\big|h^{\widehat{\bm Z}}_{\bullet}(Q_S^{\widehat{\bm Z}})
-h^{\bm Z}_{\bullet}(Q_S^{\bm Z})\big|
\;\le\; C_\star\,s_{\max}^2\sqrt{\frac{\log p}{n}} .
\label{eq:propagation}
\end{eqnarray}
The same bound holds for the Bessel-free score of
Proposition~\ref{prop:bessfree}.
\end{theorem}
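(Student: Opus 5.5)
The proof has three steps. First, a mean-value argument on $h_\bullet$ with a uniform bound on its derivative. Second, a perturbation bound on the quadratic forms $Q_S$. Third, combining the two on the event in \eqref{eq:npndev}.

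\emph{Step 1: reduce to a bound on $Q_S$.} By the mean value theorem, $h^{\widehat{\bm Z}}_\bullet(Q_S^{\widehat{\bm Z}})-h^{\bm Z}_\bullet(Q_S^{\bm Z})=h_\bullet'(\bar Q)\,(Q_S^{\widehat{\bm Z}}-Q_S^{\bm Z})$ for some $\bar Q$ between the two forms. Here $h_\bullet$ is the same function \eqref{eq:hdef} in both cases, since only the data change. Lemma~\ref{lem:precision} gives $h_\bullet'(Q)=-\tfrac12\E_\bullet[D^{-1}\mid Q]$. Theorem~\ref{thm:shrinkage} gives the explicit forms \eqref{eq:precmeans}. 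For NIG, $\E[D^{-1}\mid Q]=(n+2a_0)/(Q+2b_0)\le (n+2a_0)/Q$. For NG, the Bessel-ratio expansion gives $(n-2a)/Q+O(1/n)$. A crude uniform bound also suffices: the posterior of $D$ under the Gamma prior is stochastically larger than under a flat prior, so $\E_{\mathrm{NG}}[D^{-1}\mid Q]\le (n-2a+2)/Q$ up to constants. In either case $|h_\bullet'(\bar Q)|\le C n/\bar Q$. The statement therefore reduces to showing two things: $\bar Q\ge c\,n$, and $|Q_S^{\widehat{\bm Z}}-Q_S^{\bm Z}|\le C s_{\max}^2 n\sqrt{\log p/n}$. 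These combine to $\frac1n|h-h|\le C s_{\max}^2\sqrt{\log p/n}$.

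\emph{Step 2: perturbation of $Q_S/n$.} Write $Q_S/n=\widehat\Sigma_{jj}-\widehat\Sigma_{jS}(\widehat\Sigma_{SS}+(n\tau^2)^{-1}I)^{-1}\widehat\Sigma_{Sj}$. The population analogue uses the sample covariance $\Sigma^{\bm Z}$ of the true latent $\bm Z$. I would compare both $\widehat\Sigma$ and the sample covariance of $\bm Z$ with $\Sigma$. Condition~\ref{cond:npn} covers the first comparison, and standard Gaussian concentration at the same rate covers the second; that event can be absorbed into the one in \eqref{eq:npndev}, enlarging $c_1$. For an $(s+1)\times(s+1)$ block with $s\le s_{\max}$, the entrywise error $\epsilon=c_1\sqrt{\log p/n}$ gives operator-norm error at most $(s_{\max}+1)\epsilon$. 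Once $s_{\max}\epsilon\le 1/(2\kappa)$, the eigenvalue assumption gives $\lambda_{\min}(\widehat\Sigma_{SS})\ge 1/(2\kappa)$. Expanding the Schur complement, the three pieces are bounded as follows:
\begin{itemize}
\item the cross-term difference, by $\|\widehat\Sigma_{jS}-\Sigma_{jS}\|\,\|(\cdot)^{-1}\|\,\|\Sigma_{Sj}\|$;
\item the inverse difference, by $\|A^{-1}-B^{-1}\|\le\|A^{-1}\|\|A-B\|\|B^{-1}\|$;
\item the diagonal entry, directly by $\epsilon$.
\end{itemize}
Each piece is bounded by $C(\kappa,\tau^2)\,s_{\max}\epsilon$ times vector norms of order $\sqrt{s_{\max}}$. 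Hence $|Q_S^{\widehat{\bm Z}}-Q_S^{\bm Z}|/n\le C s_{\max}^2\epsilon$, which is where the $s_{\max}^2$ factor arises (a possibly lossy count). The bound holds uniformly over $j$ and $|S|\le s_{\max}$ because it is deterministic on the event.

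\emph{Step 3: lower bound on $Q$, and the main obstacle.} The hardest point is keeping $\bar Q/n$ bounded away from zero uniformly, which is needed so that $h_\bullet'$ stays of order one after dividing by $n$. The population Schur complement $\Sigma_{jj}-\Sigma_{jS}\Sigma_{SS}^{-1}\Sigma_{Sj}$ is the reciprocal of a diagonal entry of the inverse of the $(S\cup\{j\})$ block, so it is at least $\kappa^{-1}$. The ridge term only increases $Q$. The perturbation from Step 2 is $o(1)$ when $s_{\max}^2\sqrt{\log p/n}\to0$. So for $n$ large both forms, and hence $\bar Q$, exceed $n/(2\kappa)$. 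If $s_{\max}^2\sqrt{\log p/n}$ is not small, the claimed bound can be made trivial by enlarging $C_\star$; this needs an a priori bound $|h/n|\le C$, which follows from $\widehat\Sigma_{jj}\le1$-type bounds.

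For the Bessel-free score, Proposition~\ref{prop:bessfree} shows it differs from the exact score by a relative $O(n^{-2})$, so $\log\widetilde m$ differs from $\log m$ by $O(n^{-2})$. Dividing by $n$, this is absorbed into the right-hand side of \eqref{eq:propagation}. This uses $Q\le cn$, which holds since $Q_S\le\|\widehat{\bm z}_j\|^2\le n\,\widehat\Sigma_{jj}$ and $\widehat\Sigma_{jj}$ is bounded.
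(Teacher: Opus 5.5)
Your proposal is correct and follows the paper's proof essentially step for step: the ridge Schur-complement form of $Q_S/n$, a resolvent bound giving $|Q_S^{\widehat{\bm Z}}-Q_S^{\bm Z}|\lesssim\kappa^2 s_{\max}^2\,n\varepsilon$ with $\varepsilon=c_1\sqrt{\log p/n}$, the floor $Q_S\ge n/(2\kappa)$ from the eigenvalue bound, and the mean value theorem with $|h_{\bullet}'|=\tfrac12\E_{\bullet}[D^{-1}\mid Q]=O(1)$ on that range. Where you deviate you are more careful than the paper, which silently identifies $Q_S^{\bm Z}/n$ with $g_S(\Sigma)$ while you compare both $\widehat\Sigma$ and the latent sample covariance with $\Sigma$ (the Gaussian-concentration event must then be intersected with, not absorbed into, the event of \eqref{eq:npndev}), and you treat the Bessel-free score through its $O(n^{-2})$ log-score error and the triangle inequality rather than through a derivative comparison that Proposition~\ref{prop:bessfree} does not provide; two asides, however, are wrong but unnecessary and should be dropped. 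First, the Gamma-prior posterior is not in general stochastically larger than the flat-prior one, since the ratio $D^{a-1}e^{-bD}$ is not monotone; the recurrence \eqref{eq:Trecur} instead gives directly $\E_{\mathrm{NG}}[D^{-1}\mid Q]\le(n-2a)/Q+2b/(n-2a-2)$. Second, your fallback for non-small $s_{\max}^2\sqrt{\log p/n}$ does not go through, because $\widehat\Sigma_{jj}\le 1+\varepsilon$ bounds $Q_S$ only from above and so gives no upper bound on $h_{\bullet}(Q_S)/n$; this does not matter, since the theorem, like the paper's proof, is in any case a large-$n$ statement with $s_{\max}\varepsilon\to0$.
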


Thus each log node score is perturbed by at most $O(s_{\max}^2\sqrt{n\log p})$ in
absolute terms, or $O(s_{\max}^2\sqrt{\log p/n})$ per observation, which is of
smaller order than the per-edge signal exploited in the next theorem.

\subsection{Graph-selection consistency}

We now show that the posterior concentrates on the data-generating DAG
$\mathcal D_0$ within the assumed ordering. Let $s_0$ be the maximum in-degree of
$\mathcal D_0$ and let
$\beta_{\min}=\min_{(j,k):k\in\mathrm{pa}_{\mathcal D_0}(j)}|L^0_{jk}|$ be its
smallest nonzero standardized coefficient. We impose the following
high-dimensional conditions, all explicit.

\begin{condition}[High-dimensional regime]
\label{cond:hd}
\emph{(C1)} $\log p=o(n)$, $s_{\max}\,\log p=o(n)$, and the in-degree bound
$s_{\max}$ of the prior is a fixed constant with $s_{\max}\ge s_0+1$; \emph{(C2)}
the eigenvalue bound holds for every principal submatrix of $\Sigma$ of order at
most $s_{\max}+1$ (so that, by the in-degree cap, every regression
$Q_S,Q_{S\cup\{k\}}$ arising in the model space is eigenvalue-controlled);
\emph{(C3)}
the edge probability satisfies $\theta_n=\min\{1/2,\,p^{-(1+u)}\}$ for a fixed
$u>u_0$, where $u_0=1+2C_\star s_{\max}^2$ is an explicit threshold ensuring that
the per-spurious-edge penalty dominates both the model-space entropy and the
nonparanormal transform perturbation ($C_\star$ as in
Theorem~\ref{thm:propagation}); \emph{(C4)} the beta-min condition
$\beta_{\min}^2\ge c_\beta\,s_{\max}\sqrt{\dfrac{\log(p\vee n)}{n}}$ holds with
$c_\beta> 32\,\kappa^2(1+C_\star s_{\max})$, where $C_\star$ is the transform
constant of Theorem~\ref{thm:propagation} and $\kappa$ is the eigenvalue bound of
(C2).
\end{condition}

\begin{remark}
The beta-min requirement in (C4) is stronger, by a factor
$s_{\max}\sqrt{n/\log(p\vee n)}$, than the condition $\beta_{\min}^2\gtrsim
\log(p\vee n)/n$ that suffices in the fully Gaussian model \citep{cao2019}. The
strengthening is the unavoidable price of estimating the $p$ marginal transforms:
the plug-in normal scores converge at the nonparanormal rate $\sqrt{\log p/n}$
rather than the parametric rate $\log p/n$, and (C4) is exactly what guarantees
that the per-edge signal $n\beta_{\min}^2$ dominates the transform perturbation
of Theorem~\ref{thm:propagation}. The same perturbation is the reason (C3)
requires the edge-prior exponent $u$ to exceed $u_0=1+2C_\star s_{\max}^2$ rather
than merely $u>0$: a spurious edge changes the residual energy by a squared sample
partial correlation near a true zero, whose plug-in error is of order
$n\varepsilon^2=O(\log p)$, exactly the order of the Occam penalty, so the
penalty must be inflated by a transform-dependent constant to retain control of
over-selection. Both strengthenings are constants under the bounded-in-degree
assumption ($s_{\max}=O(1)$) and reduce to the Gaussian thresholds as
$C_\star\to0$.
\end{remark}

\begin{theorem}[Strong selection consistency]
\label{thm:selection}
Under Conditions~\ref{cond:npn}--\ref{cond:hd}, for either innovation prior with
fixed hyperparameters,
\begin{eqnarray}
\pi\big(\mathcal D=\mathcal D_0\mid\widehat{\bm Z}\big)
\xrightarrow{\;\Prob\;}1
\qquad\text{as }n\to\infty .
\label{eq:selection}
\end{eqnarray}
Moreover the posterior odds against any fixed alternative $\mathcal D\ne
\mathcal D_0$ within the ordering decay at the explicit rate
\begin{eqnarray}
\frac{\pi(\mathcal D\mid\widehat{\bm Z})}{\pi(\mathcal D_0\mid\widehat{\bm Z})}
\le \exp\!\Big\{-c\,n\,\beta_{\min}^2\,|\mathcal D_0\setminus\mathcal D|
-\big(\tfrac12\log n+(1+u-C_\star s_{\max}^2)\log p\big)\,
|\mathcal D\setminus\mathcal D_0|\Big\}\,(1+o(1))
\label{eq:odds}
\end{eqnarray}
for the explicit constant $c=(8\kappa^2)^{-1}>0$. The missing-edge
penalty $c\,n\beta_{\min}^2$ comes from the beta-min separation; the spurious-edge
penalty is the $\tfrac12\log n$ Occam factor plus the $(1+u)\log p$ edge prior,
less the $C_\star s_{\max}^2\log p$ nonparanormal transform perturbation. By (C3)
its coefficient on $\log p$ satisfies $1+u-C_\star s_{\max}^2>2+C_\star s_{\max}^2$,
so each spurious edge is penalized by strictly more than the $2\log p$ model-space
entropy, which is what drives the union bound below. The $o(1)$ factor is uniform
over $\mathcal D$.
\end{theorem}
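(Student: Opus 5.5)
The argument works node by node, following the standard template for Gaussian DAG selection within a fixed ordering \citep{cao2019}, with the plug-in error controlled by Theorem~\ref{thm:propagation}.

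\emph{Reduction to node-wise comparisons.} By Proposition~\ref{prop:scoreeq} and the Bernoulli prior \eqref{eq:dagprior}, the posterior odds $\pi(\mathcal D\mid\widehat{\bm Z})/\pi(\mathcal D_0\mid\widehat{\bm Z})$ factor as a product over $j$ of node ratios
\[
\Big(\tfrac{\theta_n}{1-\theta_n}\Big)^{|S|-|S_0|}\,\Big(\tfrac{|M_{S_0}|}{|M_S|}\Big)^{1/2}\exp\{h_\bullet(Q_S)-h_\bullet(Q_{S_0})\},
\]
where $S=\mathrm{pa}_{\mathcal D}(j)$ and $S_0=\mathrm{pa}_{\mathcal D_0}(j)$. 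It therefore suffices to bound one node, uniformly over $j$ and over all $|S|\le s_{\max}$. I would route every comparison through $S\cup S_0$, which has size at most $2s_{\max}$. This requires applying (C2) at that order; if (C2) only covers order $s_{\max}+1$, I would instead add or remove one edge at a time, which keeps each step inside the controlled class.

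\emph{Determinant and $h$ terms.} By Sylvester's identity and (C2), $|M_S|=\tau^{2s}|\Lambda_S|$, with $\Lambda_S\approx n\Sigma_{SS}$ on the event of Condition~\ref{cond:npn}. Each extra parent therefore contributes a factor $(n\tau^2)^{-1/2}$ up to constants; this is the $\tfrac12\log n$ Occam term. For the $h$ term I use Lemma~\ref{lem:precision}: $h'_\bullet(Q)=-\tfrac12\E_\bullet[D^{-1}\mid Q]$, and by \eqref{eq:precmeans} this equals $-\tfrac{n}{2Q}(1+O(n^{-1}))$ for both priors. So
\[
h_\bullet(Q_S)-h_\bullet(Q_{S_0})=-\tfrac n2\log(Q_S/Q_{S_0})+O(1),
\]
uniformly in $Q/n$ bounded away from $0$ and $\infty$, and (C2) guarantees that range.

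\emph{Two cases.}

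\emph{Missing true edges}, $S_0\not\subseteq S$. In population, $Q_S/n-Q_{S\cup S_0}/n\ge\kappa^{-2}\beta_{\min}^2|S_0\setminus S|$, obtained from the Schur complement of $\Sigma_{(S\cup S_0\cup j)}$. Replacing $\Sigma$ by $\widehat\Sigma$ costs $O(s_{\max}\sqrt{\log p/n})$ in entrywise-to-operator norm. By (C4), with $c_\beta>32\kappa^2(1+C_\star s_{\max})$, at most half the gap is lost. After taking logarithms with $Q/n\le\kappa$, this gives the factor $\exp\{-n\beta_{\min}^2/(8\kappa^2)\}$ per missing edge; the Occam and prior terms are of lower order.

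\emph{Spurious edges}, $S\supsetneq S_0$. Here $Q_{S_0}-Q_S=n\,\widehat r^2\,Q_{S_0}/n$, where $\widehat r$ is a sample partial correlation whose population value is $0$. On the deviation event, $\widehat r^2\le c\,s_{\max}^2\log p/n$. So the likelihood gain per spurious edge is at most $C_\star s_{\max}^2\log p$; this is exactly the content of Theorem~\ref{thm:propagation} specialized to nested sets. Against this gain stand the Occam factor $\tfrac12\log n$ and the prior factor $\log(1/\theta_n)=(1+u)\log p$. Combining the two cases yields \eqref{eq:odds}.

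\emph{Summing over alternatives.} To get \eqref{eq:selection}, sum \eqref{eq:odds} over all $\mathcal D\ne\mathcal D_0$ compatible with the ordering. The number of DAGs with $k$ spurious edges is at most $p^{2k}$, and the number with $m$ missing edges is at most $\binom{|\mathcal D_0|}{m}\le (ps_0)^m$. Since $1+u-C_\star s_{\max}^2>2+C_\star s_{\max}^2$ by (C3), the spurious sum is $\sum_k p^{-k(\cdot)}\to0$. The missing sum is $\sum_m\exp\{m(\log(ps_0)-cn\beta_{\min}^2)\}\to0$, because $n\beta_{\min}^2\gg\log p$ by (C4).

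\emph{Main obstacle.} The hard step is making the spurious-edge bound uniform. One needs $\max_{j,S,k}\widehat r^2_{jk\cdot S}=O(\log p/n)$ from the entrywise bound \eqref{eq:npndev} alone, without Gaussianity of $\widehat{\bm Z}$. The true sample partial correlation is $O_P(\sqrt{\log p/n})$, so its square is fine; the plug-in error enters linearly, though, and a crude expansion only gives $\sqrt{\log p/n}$ rather than $\log p/n$. My first attempt would be to expand $\widehat r$ around the population zero using the Schur complement, so that the error is \emph{additive} in $\widehat r$ itself and hence squared in $\widehat r^2$. That yields $O(s_{\max}^2\log p/n)$, which is exactly the $C_\star s_{\max}^2\log p$ term absorbed by (C3).
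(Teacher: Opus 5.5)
Your plan has the same architecture as the paper's proof. It factorizes the odds node by node. The missing-edge branch runs on a Schur-complement gap together with (C4). The spurious-edge branch rests on the fact that the partial-regression functional vanishes at $\Sigma$, so its square is $O(s_{\max}^2\log p/n)$ rather than $O(\sqrt{\log p/n})$. The summation uses the same $p^{2k}$ and $(ps_0)^m$ counting. The expansion in your last paragraph is precisely the paper's $\psi$ argument in Step~3. It is this expansion, not Theorem~\ref{thm:propagation}, that gives the $\log p$ gain, since that theorem only yields $O(s_{\max}^2\sqrt{n\log p})$ per node.

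The routes differ in one respect. The paper first argues at the latent level, where the null increment is asymptotically $\chi^2_1$ and hence $O_{\Prob}(1)$ per edge. It then transfers by bounding $\psi(\widehat\Sigma^{\widehat{\bm Z}})^2-\psi(\widehat\Sigma^{\bm Z})^2$, which also requires Gaussian concentration of $\widehat\Sigma^{\bm Z}$. You work directly on the event of Condition~\ref{cond:npn}, comparing $\widehat\Sigma$ with $\Sigma$. This is more economical, and your bounds are deterministic on that event, so they are uniform over $(j,S,k)$, which is what the union bound needs. You also correctly treat the Occam and prior terms in the missing-edge branch as lower-order terms working against you. The paper says they strengthen that bound, which has the sign reversed, though this is harmless under (C4).

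\textbf{The fallback for the eigenvalue order fails.} To keep every intermediate parent set of size at most $s_{\max}$, you must delete the spurious parents $B=S\setminus S_0$ before adding the missing true ones $A=S_0\setminus S$. But a parent in $B$ is null only given all of $S_0$, by the local Markov property. While $A$ is absent, that parent can act as a proxy for $A$, so its deletion step admits no $O(\log p)$ bound. Telescoping is an identity: whatever path you take, the node comparison needs a lower bound on $Q_S-Q_{S_0}\approx n\,\mathrm{Var}(\bm\beta_A^\top Z_A\mid Z_S)$, where $\bm\beta_A=(L^0_{jk})_{k\in A}$. That bound calls for eigenvalue control of $\Sigma_{A\cup S}$, which has order up to $s_0+s_{\max}$.

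The paper does not supply this either. Its Step~3 handles mixed nodes through the orthogonalized energy of a single missing parent $k$ given $S_j$. That suffices only when exactly one true parent is missing, or more generally when $|A|+|S|\le s_{\max}+1$. So the honest repair, for your argument and the paper's alike, is to strengthen (C2) to order $s_{\max}+s_0+1$, which your primary route already assumes.

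\textbf{Constant.} Under that control, $\mathrm{Var}(Z_A\mid Z_S)\succeq\kappa^{-1}I$, so the population gap is at least $\kappa^{-1}\beta_{\min}^2|A|$, not merely $\kappa^{-2}\beta_{\min}^2|A|$. With your $\kappa^{-2}$, halving for the perturbation and dividing by $Q/n\le\kappa$ gives $(4\kappa^3)^{-1}$ per missing edge, not the stated $c=(8\kappa^2)^{-1}$. With $\kappa^{-1}$ you get $(4\kappa^2)^{-1}$, which leaves room for the opposing Occam and prior terms.
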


The two exponents in \eqref{eq:odds} encode the familiar tension: each missing
true edge costs $\Omega(n\beta_{\min}^2)$ by the beta-min condition, while each
spurious edge is penalized by the $\tfrac12\log n$ Occam factor of the Bessel and
Student-$t$ scores together with the $(1+u)\log p$ contributed by the sparse edge
prior. Condition~\ref{cond:hd} is exactly what makes the reward dominate the
penalties uniformly over the $\sum_{r}\binom{p}{r}$ alternatives of a given size.
The transform error of Theorem~\ref{thm:propagation}, being $o(n\beta_{\min}^2)$
per edge under (C4), cannot overturn the separation, which is how the
nonparanormal model inherits the Gaussian guarantee. 

\section{Simulation study}
\label{sec:sim}

We assess structure recovery under both Gaussian and nonparanormal data
generating mechanisms. Latent DAGs are drawn with $p=20$ nodes and
expected in-degree two; nonzero Cholesky coefficients are sampled uniformly on
$\pm[0.4,0.9]$ and innovation variances on $[0.5,1.5]$, after which the latent
covariance is scaled to a correlation matrix---which leaves the DAG, encoded by
the zero pattern of the precision, unchanged. For the Gaussian design
the latent vector is observed directly; for the nonparanormal design each margin
is passed through a strictly increasing transform chosen to induce strong
skewness and heavy tails (an exponential and a cube transform of the standardized
Gaussian), reproducing the qualitative features of Figure~\ref{fig:motivation}.
The sample size is $n=200$ and each design is replicated $14$ times.

We compare the two proposed samplers, NPN-NG and NPN-NIG, against the Gaussian
Normal--Gamma model fitted to the raw data (G-NG, that is, the same model without
the transform), the PC algorithm \citep{kalisch2007} and its rank-based variant
rankPC \citep{harris2013} applied to the nonparanormal correlation. Because the
constraint-based competitors return an undirected skeleton, accuracy is measured
at the skeleton level by the (undirected) structural Hamming distance and the
$F_1$ score for edge recovery; reported values are medians over the replicates.
Each Bayesian fit pools three independent chains of $1000$ iterations ($350$
burn-in) started from the empty graph, and the median-probability graph (edges
with posterior inclusion probability exceeding $1/2$) is reported.

Figure~\ref{fig:sim} summarizes the skeleton Hamming distances. On Gaussian
data (panel a) the transform costs essentially nothing: NPN-NG, NPN-NIG and the
untransformed Gaussian model all attain a median SHD of $0$ and a median $F_1$ of
$1$, and all three clearly outperform the constraint-based competitors (median
SHD $5$ for PC and $6$ for rankPC). On nonparanormal data (panel b) the picture
changes sharply. The Gaussian model G-NG, which ignores the marginal
misspecification, sees its median error rise from $0$ to $4$ ($F_1$ from $1$ to
$0.88$); Gaussian PC degrades most (median SHD $11$, $F_1=0.72$) and rank-based
rankPC improves on it but trails the samplers (median SHD $6$, $F_1=0.82$); the
two nonparanormal samplers retain near-Gaussian-data accuracy. The NPN-NG sampler
has the smallest error in the non-Gaussian setting (median SHD $1$, $F_1=0.97$),
slightly ahead of NPN-NIG (median SHD $1.5$, $F_1=0.94$), consistent with its
more conservative behaviour in the conservative regime of
Theorem~\ref{thm:shrinkage} (with hyperparameters set so that $(a+a_0)^2>4bb_0$).
A one-sided Wilcoxon rank-sum test rejects equality of NPN-NG and G-NG on
nonparanormal data ($p\approx0.03$). Occasional slow mixing of a single chain
produces a few high-SHD replicates, visible as outliers; pooling chains
suppresses but does not entirely remove them.

\begin{figure}[t]
\centering
\includegraphics[width=\textwidth]{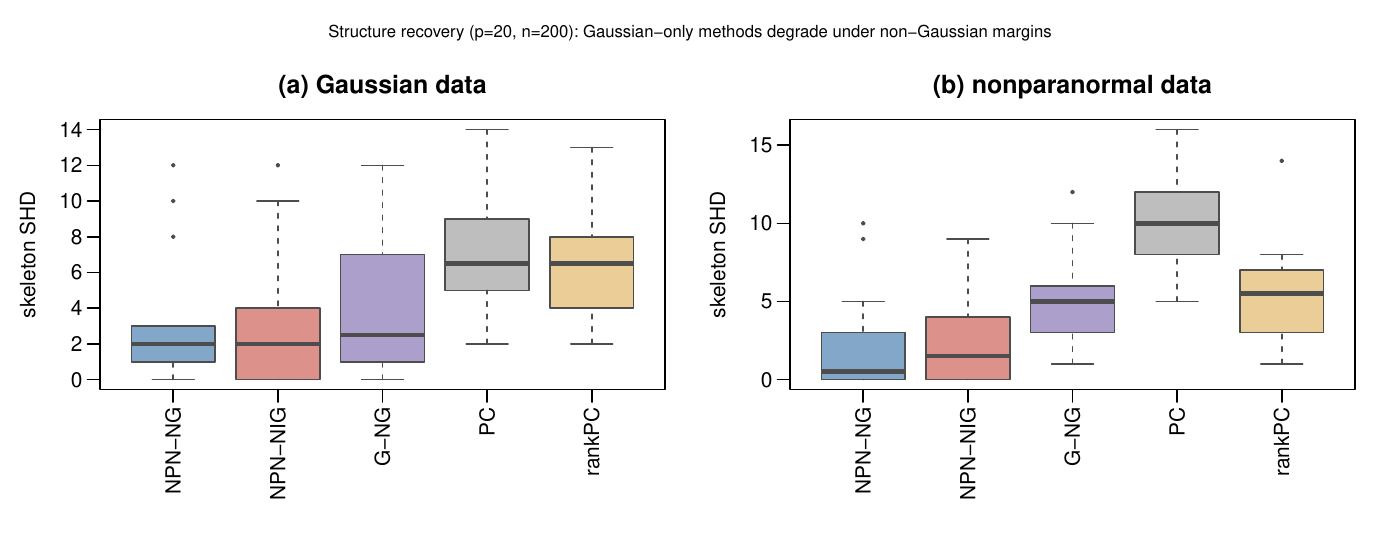}
\caption{Skeleton structural Hamming distance over $14$ replicates ($p=20$,
$n=200$). (a) Gaussian data: the transform is harmless and the samplers match the
untransformed Gaussian model. (b) Nonparanormal data: Gaussian-only methods
(G-NG, PC) degrade sharply, while the nonparanormal samplers retain their
accuracy.}
\label{fig:sim}
\end{figure}

\section{Application: the human T-cell protein-signalling network}
\label{sec:real}

We analyse the flow-cytometry data of \citet{sachs2005}, the standard benchmark
for causal discovery in systems biology. The data record the simultaneous
abundances of $p=11$ phosphoproteins and phospholipids
(Raf, Mek, Plcg, PIP2, PIP3, Erk, Akt, PKA, PKC, P38, Jnk) in individual human
primary T cells. We use the $n=853$ observational measurements collected under
the general anti-CD3/anti-CD28 stimulation, treating all variables as continuous
concentrations; the data are available through the \texttt{bnlearn} repository
\citep{scutari2010}. As Figure~\ref{fig:motivation} anticipated, the margins are
strongly non-Gaussian, which is exactly the regime in which the nonparanormal
model is expected to help. The biological consensus network reported by
\citet{sachs2005}, with eleven nodes and its well-established directed edges,
serves as the reference.

Because $n=853$ places the Bessel order $a-n/2$ far from its argument, where the
exact $K_\nu$ overflows, we evaluate the NG score through the Bessel-free
uniform-asymptotic form of Proposition~\ref{prop:bessfree} (relative error below
$10^{-5}$ at this $n$); we fix the topological order to that of the consensus
network, made acyclic in the standard way by deleting the PIP3$\to$PLC$\gamma$
edge. We run Algorithm~\ref{alg:mcmc} under both priors for $6000$ iterations
after a $2000$ burn-in across four chains. The diagnostics in
Figure~\ref{fig:diag} indicate convergence: the four log-posterior traces are
well mixed and every monitored edge has Gelman--Rubin $\widehat R\le 1.003$, far
below the $1.1$ threshold. Posterior edge probabilities under NPN-NG are
displayed in Figure~\ref{fig:sachs}(a), and the median-probability network in
Figure~\ref{fig:sachs}(b). The median graph is sparse and high-precision: it
recovers, at posterior probability near one, five established interactions---the
PKA$\to$Akt link, the PKC$\to$P38 and PKC$\to$Jnk stress-kinase edges, the
Raf$\to$Mek step of the MAPK cascade, and the PIP2$\to$PIP3 lipid interaction---
and introduces a single spurious Akt$\to$Erk edge, so its recall is conservative
while its few calls are largely correct. Against the consensus skeleton the three
Bayesian methods achieve comparable $F_1$ (NPN-NG and NPN-NIG $0.44$, G-NG
$0.48$), all well above Gaussian PC ($0.25$) and rank-based rankPC ($0.35$). On
this single benchmark the marginal transform thus yields no clear gain over the
Gaussian Bayesian fit---in contrast to the simulation, where its advantage is
pronounced under strong non-Gaussianity; since the Sachs consensus is itself a
contested reference, this comparison should be read with caution. The full edge
list and the per-edge probabilities are reproduced by the companion real-data
script.

\begin{figure}[t]
\centering
\includegraphics[width=\textwidth]{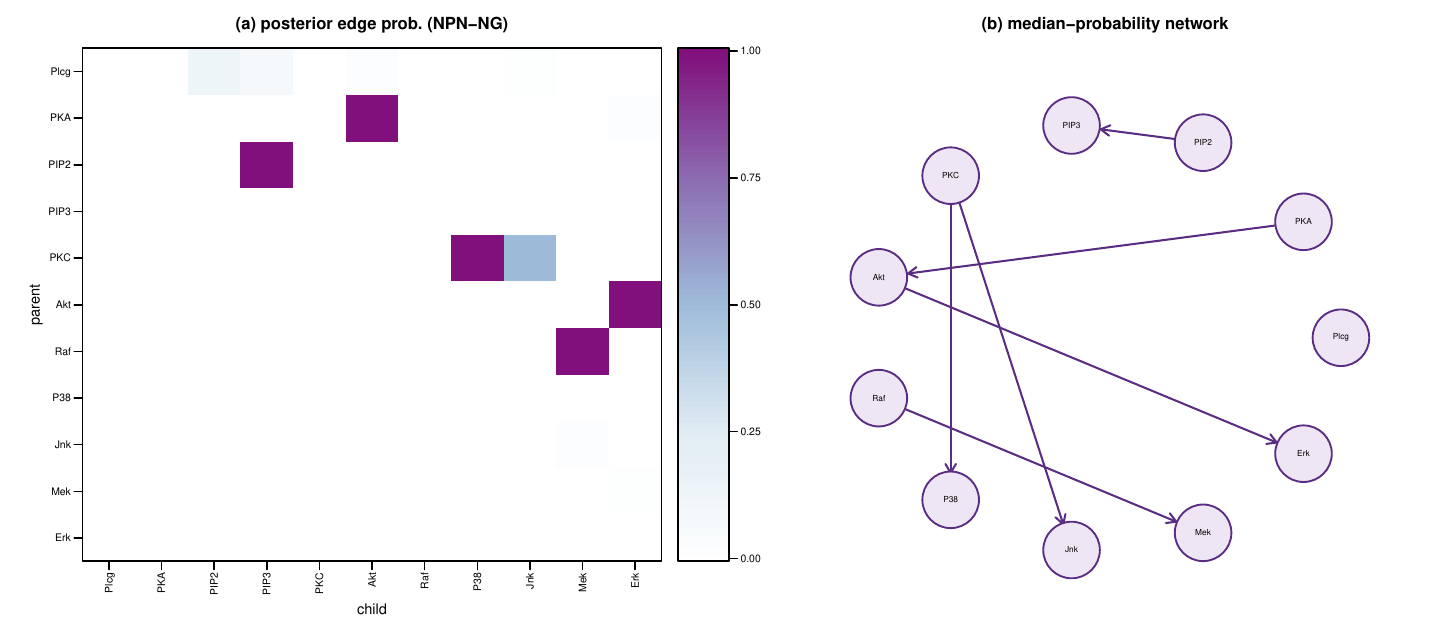}
\caption{Sachs protein-signalling analysis under NPN-NG. (a) Posterior edge
probabilities (parent to child). (b) Median-probability network; an edge is
drawn when its posterior probability exceeds one half.}
\label{fig:sachs}
\end{figure}

\begin{figure}[t]
\centering
\includegraphics[width=0.86\textwidth]{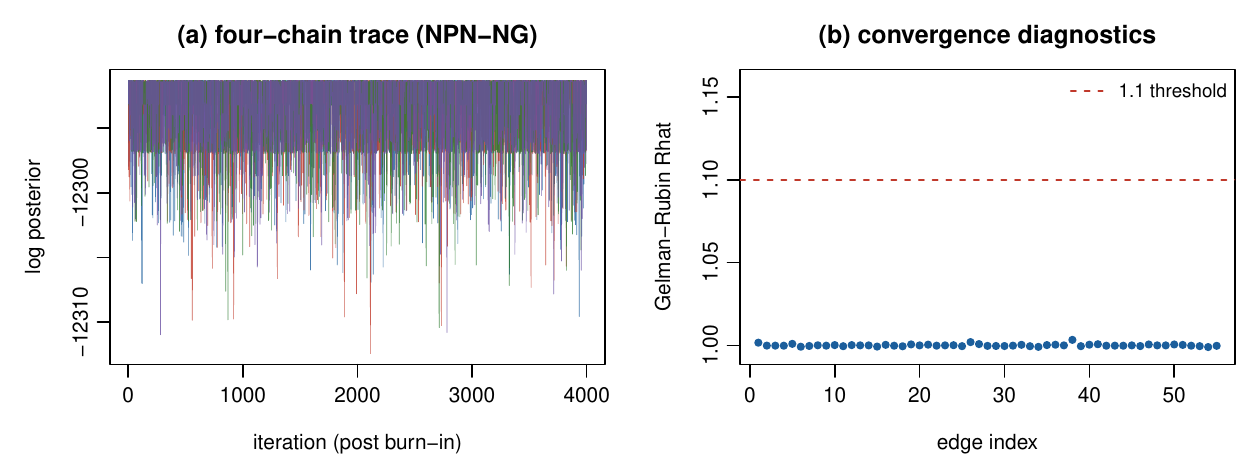}
\caption{Convergence diagnostics for the Sachs run. (a) Four-chain log-posterior
traces after burn-in. (b) Gelman--Rubin $\widehat R$ for the monitored edges,
all below the $1.1$ threshold.}
\label{fig:diag}
\end{figure}

\section{Discussion}
\label{sec:disc}

We have developed Bayesian DAG learning in the nonparanormal family and combined
it with two innovation-variance priors, the non-conjugate Gamma and the conjugate
Inverse-Gamma, neither previously used in the directed setting. The combination
yields closed-form node scores---a Bessel score for the Gamma prior and a
Student-$t$ score for the Inverse-Gamma prior---which in turn enable an informed,
locally-balanced sampler and a Bessel-free score that scales to hundreds of
nodes. The theory is complete: closed forms, full conditionals, an explicit
precision-posterior characterization of the differential shrinkage, a propagation
bound that transfers the nonparanormal rate to the scores, and strong
graph-selection consistency under $\log p=o(n)$. Empirically the transform is
harmless on Gaussian data and decisive on skewed simulated data, and on the Sachs
network the method recovers established interactions at high posterior
probability and outperforms constraint-based competitors, while matching a
Gaussian Bayesian model on that single benchmark.

Several extensions are natural. The known-ordering assumption can be relaxed by
sampling over orderings jointly with the structure, at the cost of a larger move
space in which the informed proposal should help even more. The normal-score
transform could be replaced by a fully Bayesian rank likelihood
\citep{hoff2007}, integrating out the margins at the price of the closed-form
score; whether an informed sampler can be retained in that case is open. Finally,
the differential-shrinkage analysis suggests adapting the Gamma hyperparameters
to a target false-discovery level, which we leave to future work.


\section*{Funding}
Iran National Science Foundation (INSF), grant No.~4015320.

\section*{Data availability}
The protein-signalling data analysed in Section~\ref{sec:real} are the standard
Sachs benchmark \citep{sachs2005,scutari2010}, originally distributed through the
\texttt{bnlearn} repository (\url{https://www.bnlearn.com/research/sachs05/}); the
companion script downloads the data and the consensus network automatically from
a public mirror. All code reproducing the simulations and the data analysis is
provided as Supplementary Material.



\appendix
\renewcommand{\thesection}{\Alph{section}}

\section{Closed-form scores: proofs}
\label{app:scores}

We first record the integral identity behind both scores and prove it from the
defining integral representation of the Bessel function, so that no step rests on
an unproved external claim.

\begin{lemma}[Generalized inverse Gaussian normalizer]
\label{lem:gigint}
For real $\nu$ and $\psi,\chi>0$,
\begin{eqnarray}
J(\nu,\psi,\chi):=\int_0^\infty t^{\nu-1}
\exp\!\Big\{-\tfrac12\big(\psi t+\chi/t\big)\Big\}\,dt
=2\left(\frac{\chi}{\psi}\right)^{\nu/2}K_\nu\!\big(\sqrt{\psi\chi}\big).
\label{eq:gigint}
\end{eqnarray}
\end{lemma}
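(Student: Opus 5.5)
The plan is to reduce \eqref{eq:gigint} to the standard integral representation of the modified Bessel function of the third kind,
\begin{eqnarray*}
K_\nu(x)=\tfrac12\int_0^\infty u^{\nu-1}\exp\!\Big\{-\tfrac{x}{2}\big(u+u^{-1}\big)\Big\}\,du,\qquad x>0,\ \nu\in\R,
\end{eqnarray*}
which we take as the defining representation of $K_\nu$ (it is \citet[\S10.32.10]{dlmf} after the substitution $u=e^{s}$, equivalently $K_\nu(x)=\int_0^\infty e^{-x\cosh s}\cosh(\nu s)\,ds$). Everything else is a single rescaling of the integration variable.

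First I will check that $J(\nu,\psi,\chi)$ is finite for every real $\nu$ when $\psi,\chi>0$. Near $t=0$ the factor $e^{-\chi/(2t)}$ decays faster than any power $t^{\nu-1}$ can grow. Near $t=\infty$ the factor $e^{-\psi t/2}$ dominates any power. So the integrand is integrable at both ends.

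Next I will set $c=\sqrt{\chi/\psi}$ and substitute $t=cu$, so that $dt=c\,du$. Then
$\psi t=\sqrt{\psi\chi}\,u$ and $\chi/t=\sqrt{\psi\chi}\,u^{-1}$, and the exponent becomes $-\tfrac12\sqrt{\psi\chi}\,(u+u^{-1})$. This gives
\begin{eqnarray*}
J(\nu,\psi,\chi)=c^{\nu}\int_0^\infty u^{\nu-1}\exp\!\Big\{-\tfrac{\sqrt{\psi\chi}}{2}\big(u+u^{-1}\big)\Big\}\,du
=2\Big(\frac{\chi}{\psi}\Big)^{\nu/2}K_\nu\big(\sqrt{\psi\chi}\big),
\end{eqnarray*}
where the last step applies the representation above with $x=\sqrt{\psi\chi}$.

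The only delicate point is whether one is allowed to quote the representation. If the intended defining representation is instead $K_\nu(x)=\int_0^\infty e^{-x\cosh s}\cosh(\nu s)\,ds$, I will first derive the $u$-form from it. Substituting $u=e^{s}$ gives $\tfrac12\int_{-\infty}^{\infty}e^{-x\cosh s}e^{\nu s}\,ds$, because the odd part $\sinh(\nu s)$ integrates to zero by symmetry. That expression is the $u$-form exactly. The argument is valid for negative $\nu$ as well, and the symmetry $K_{-\nu}=K_\nu$ comes out of the same computation via $u\mapsto 1/u$. This matters because the Bessel orders used in Theorem~\ref{thm:scoreNG} are negative.
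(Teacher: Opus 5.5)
Your proof is correct and follows essentially the paper's route: both reduce $J$ to the symmetric representation $K_\nu(x)=\tfrac12\int_0^\infty u^{\nu-1}e^{-\frac{x}{2}(u+u^{-1})}\,du$ and then rescale by $t=\sqrt{\chi/\psi}\,u$. The only difference is how that representation is obtained. The paper rescales DLMF 10.32.10 by $s=(z/2)u$ and then invokes $K_\nu=K_{-\nu}$, whereas you start from the cosh form and use $u=e^s$ plus parity, which has the small advantage of not presupposing the symmetry in $\nu$. Note that the cosh form is DLMF 10.32.9, not 10.32.10, so your citation should be adjusted.
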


\begin{proof}
Start from the integral representation \citep[\S10.32.10]{dlmf}, valid for
$z>0$ and any real order $\nu$,
\begin{eqnarray}
K_\nu(z)=\tfrac12\Big(\tfrac{z}{2}\Big)^{\nu}\int_0^\infty
s^{-\nu-1}\exp\!\Big(-s-\tfrac{z^2}{4s}\Big)\,ds .
\label{eq:Kdef}
\end{eqnarray}
In \eqref{eq:Kdef} substitute $s=(z/2)u$, so that $ds=(z/2)\,du$,
$s^{-\nu-1}=(z/2)^{-\nu-1}u^{-\nu-1}$, $-s=-\tfrac{z}{2}u$ and
$-\tfrac{z^2}{4s}=-\tfrac{z}{2u}$. The prefactors collapse,
$\tfrac12(z/2)^{\nu}(z/2)^{-\nu-1}(z/2)=\tfrac12$, and therefore
\begin{eqnarray}
K_\nu(z)=\tfrac12\int_0^\infty u^{-\nu-1}
\exp\!\Big\{-\tfrac{z}{2}\big(u+u^{-1}\big)\Big\}\,du .
\label{eq:Ksym}
\end{eqnarray}
Because $K_\nu=K_{-\nu}$, replacing $\nu$ by $-\nu$ in \eqref{eq:Ksym} gives the
companion identity
\begin{eqnarray}
K_\nu(z)=\tfrac12\int_0^\infty u^{\nu-1}
\exp\!\Big\{-\tfrac{z}{2}\big(u+u^{-1}\big)\Big\}\,du .
\label{eq:Kpos}
\end{eqnarray}
Now treat the left side of \eqref{eq:gigint}. Put $t=\sqrt{\chi/\psi}\,u$, so
$dt=\sqrt{\chi/\psi}\,du$ and $t^{\nu-1}=(\chi/\psi)^{(\nu-1)/2}u^{\nu-1}$, while
\begin{eqnarray*}
\tfrac12\big(\psi t+\chi/t\big)
&=&\tfrac12\Big(\psi\sqrt{\chi/\psi}\,u+\chi\,\sqrt{\psi/\chi}\,u^{-1}\Big)
=\tfrac12\sqrt{\psi\chi}\,\big(u+u^{-1}\big).
\end{eqnarray*}
Hence, with $z=\sqrt{\psi\chi}$,
\begin{eqnarray*}
J(\nu,\psi,\chi)
&=&(\chi/\psi)^{(\nu-1)/2}\,\sqrt{\chi/\psi}\int_0^\infty u^{\nu-1}
\exp\!\Big\{-\tfrac{z}{2}\big(u+u^{-1}\big)\Big\}\,du\\
&=&\Big(\tfrac{\chi}{\psi}\Big)^{\nu/2}\cdot 2K_\nu(z),
\end{eqnarray*}
where the last equality is \eqref{eq:Kpos}. This is \eqref{eq:gigint}.
\end{proof}

\paragraph{The Woodbury reduction \eqref{eq:Qdef}.}
Write $A=\widehat{\bm Z}_S\in\R^{n\times s}$. By the Sherman--Morrison--Woodbury
identity,
\begin{eqnarray*}
M_S^{-1}=(I_n+\tau^2 AA^\top)^{-1}
=I_n-A(\tau^{-2}I_s+A^\top A)^{-1}A^\top
=I_n-A\Lambda_S^{-1}A^\top,
\end{eqnarray*}
since $\tau^{-2}I_s+A^\top A=\Lambda_S$. Pre- and post-multiplying by
$\widehat{\bm z}_j$ gives the second equality in \eqref{eq:Qdef}. The
determinant identity $|M_S|=|I_n+\tau^2AA^\top|=|I_s+\tau^2A^\top A|
=\tau^{2s}|\Lambda_S|$ is Sylvester's determinant theorem.

\begin{proof}[Proof of Theorem~\ref{thm:scoreNG}]
Integrate the Gaussian coefficient prior first. With $A=\widehat{\bm Z}_S$, the
inner integral in \eqref{eq:scoredef} is a Gaussian convolution,
\begin{eqnarray}
\int_{\R^s} N_n(\widehat{\bm z}_j;A\bm\beta,DI_n)\,
N_s(\bm\beta;\bm 0,\tau^2 D I_s)\,d\bm\beta
=N_n\!\big(\widehat{\bm z}_j;\bm 0,\,D(I_n+\tau^2 AA^\top)\big),
\label{eq:gaussconv}
\end{eqnarray}
because the marginal of a linear-Gaussian model $\widehat{\bm z}_j=A\bm\beta+
\bm e$ with $\bm\beta\sim N(0,\tau^2 D I_s)$ and $\bm e\sim N(0,DI_n)$ is centred
Gaussian with covariance $A(\tau^2 D)A^\top+DI_n=D\,M_S$. Substituting the
density on the right of \eqref{eq:gaussconv},
\begin{eqnarray}
N_n(\widehat{\bm z}_j;\bm0,DM_S)
=(2\pi D)^{-n/2}|M_S|^{-1/2}\exp\!\Big\{-\tfrac{1}{2D}\,
\widehat{\bm z}_j^\top M_S^{-1}\widehat{\bm z}_j\Big\}
=(2\pi D)^{-n/2}|M_S|^{-1/2}e^{-Q_S/(2D)} ,
\nonumber
\end{eqnarray}
with $Q_S$ as in \eqref{eq:Qdef}. Now integrate $D$ against the Gamma prior
\eqref{eq:ngprior}:
\begin{eqnarray}
m^{\mathrm{NG}}(j,S)
&=&(2\pi)^{-n/2}|M_S|^{-1/2}\frac{b^a}{\Gamma(a)}\nonumber\\
&&\times\int_0^\infty D^{-n/2}e^{-Q_S/(2D)}\,D^{a-1}e^{-bD}\,dD\nonumber\\
&=&(2\pi)^{-n/2}|M_S|^{-1/2}\frac{b^a}{\Gamma(a)}\nonumber\\
&&\times\int_0^\infty D^{(a-n/2)-1}
\exp\!\Big\{-\tfrac12\big(2b\,D+Q_S\,D^{-1}\big)\Big\}\,dD.
\label{eq:NGint}
\end{eqnarray}
The integral is $J(\nu,\psi,\chi)$ of Lemma~\ref{lem:gigint} with
$\nu=a-n/2$, $\psi=2b$ and $\chi=Q_S$, hence equals
\begin{eqnarray*}
2\,(Q_S/2b)^{(a-n/2)/2}\,K_{a-n/2}\big(\sqrt{2bQ_S}\big).
\end{eqnarray*}
Writing the exponent $(a-n/2)/2=(2a-n)/4$ gives \eqref{eq:scoreNG}.
\end{proof}

\begin{proof}[Proof of Theorem~\ref{thm:scoreNIG}]
The coefficient integration is identical, yielding the same Gaussian integrand
in $D$ as in the previous proof. Integrating against the
Inverse-Gamma prior \eqref{eq:nigprior},
\begin{eqnarray}
m^{\mathrm{NIG}}(j,S)
&=&(2\pi)^{-n/2}|M_S|^{-1/2}\frac{b_0^{a_0}}{\Gamma(a_0)}\nonumber\\
&&\times\int_0^\infty D^{-n/2}e^{-Q_S/(2D)}D^{-a_0-1}e^{-b_0/D}\,dD\nonumber\\
&=&(2\pi)^{-n/2}|M_S|^{-1/2}\frac{b_0^{a_0}}{\Gamma(a_0)}\nonumber\\
&&\times\int_0^\infty D^{-(n/2+a_0)-1}
\exp\!\Big\{-\tfrac{1}{D}\big(\tfrac{Q_S}{2}+b_0\big)\Big\}\,dD .
\nonumber
\end{eqnarray}
The substitution $r=1/D$, $dD=-r^{-2}dr$, turns the integral into
$\int_0^\infty r^{(n/2+a_0)-1}e^{-(Q_S/2+b_0)r}\,dr
=\Gamma(n/2+a_0)(Q_S/2+b_0)^{-(n/2+a_0)}$, the Gamma function integral. This is
\eqref{eq:scoreNIG}.
\end{proof}

\begin{proof}[Proof of Proposition~\ref{prop:scoreeq}]
Both scores depend on $S$ only through $Q_S$ and $|M_S|$. By
\eqref{eq:Mdef}--\eqref{eq:Qdef}, $Q_S$ and $|M_S|$ are functions of the matrix
$\widehat{\bm Z}_S^\top\widehat{\bm Z}_S$ and the vector
$\widehat{\bm Z}_S^\top\widehat{\bm z}_j$ through the symmetric forms
$\widehat{\bm z}_j^\top\widehat{\bm Z}_S\Lambda_S^{-1}\widehat{\bm Z}_S^\top
\widehat{\bm z}_j$ and $|\,\widehat{\bm Z}_S^\top\widehat{\bm Z}_S+\tau^{-2}I_s|$;
permuting the columns of $\widehat{\bm Z}_S$ conjugates these by a permutation
matrix and leaves both invariant. Hence the score is a function of the parent
\emph{set}. The factorization of \eqref{eq:dagpost} into a product of node terms
is the likelihood factorization \eqref{eq:sem}, in which the $j$th factor depends
only on $(j,\mathrm{pa}_{\mathcal D}(j))$. An edge addition or deletion changes a
single parent set, that of the edge's head; an edge reversal changes the two
parent sets of the endpoints. In all cases at most two node scores change.
\end{proof}

\section{Full conditionals: proof of Theorem~\ref{thm:fc}}
\label{app:fc}

\begin{proof}
Conditional on $D=D_{jj}$, the prior \eqref{eq:betaprior} and the Gaussian
likelihood combine as
\begin{eqnarray*}
p(\bm\beta\mid D,\widehat{\bm z}_j,S)
&\propto&\exp\!\Big\{-\tfrac{1}{2D}\|\widehat{\bm z}_j-A\bm\beta\|^2
-\tfrac{1}{2\tau^2 D}\bm\beta^\top\bm\beta\Big\}\\
&=&\exp\!\Big\{-\tfrac{1}{2D}\big(\bm\beta^\top(A^\top A+\tau^{-2}I_s)\bm\beta
-2\bm\beta^\top A^\top\widehat{\bm z}_j+\widehat{\bm z}_j^\top\widehat{\bm z}_j
\big)\Big\}.
\end{eqnarray*}
Completing the square in $\bm\beta$ with $\Lambda_S=A^\top A+\tau^{-2}I_s$ and
$\widehat{\bm\mu}_S=\Lambda_S^{-1}A^\top\widehat{\bm z}_j$ identifies a Gaussian
kernel with precision $D^{-1}\Lambda_S$ and mean $\widehat{\bm\mu}_S$, which is
\eqref{eq:fcbeta}. For the variance, multiply the likelihood, the conditional
prior \eqref{eq:betaprior} (which contributes $D^{-s/2}
\exp\{-\bm\beta^\top\bm\beta/(2\tau^2D)\}$), and the variance prior. Collecting
powers of $D$ and the terms in the exponent,
\begin{eqnarray*}
p(D\mid\bm\beta,\widehat{\bm z}_j,S)
\;\propto\; D^{-n/2}\,D^{-s/2}\,\pi_{\bullet}(D)\,
\exp\!\Big\{-\tfrac{1}{2D}\,R(\bm\beta)\Big\},\qquad
R(\bm\beta)=\|\widehat{\bm z}_j-A\bm\beta\|^2+\tau^{-2}\bm\beta^\top\bm\beta .
\end{eqnarray*}
Under the Gamma prior, $\pi_{\mathrm{NG}}(D)\propto D^{a-1}e^{-bD}$, so the
density is proportional to
\begin{eqnarray*}
D^{(a-(n+s)/2)-1}\exp\{-\tfrac12(2b\,D+R\,D^{-1})\},
\end{eqnarray*}
the $\GIG(a-(n+s)/2,2b,R)$ kernel of \eqref{eq:fcD-ng}. Under the Inverse-Gamma
prior, $\pi_{\mathrm{NIG}}(D)\propto D^{-a_0-1}e^{-b_0/D}$, the density is
proportional to
\begin{eqnarray*}
D^{-((n+s)/2+a_0)-1}\exp\{-(R/2+b_0)/D\},
\end{eqnarray*}
the $\IG((n+s)/2+a_0,\,R/2+b_0)$ kernel of \eqref{eq:fcD-nig}.
\end{proof}

\begin{proof}[Proof of Proposition~\ref{prop:informed}]
The single-edge neighbourhood is symmetric: $\mathcal D'\in\mathcal N(\mathcal D)$
if and only if $\mathcal D\in\mathcal N(\mathcal D')$, because each add, delete or
reversal move is undone by a move of the same family. Writing the proposal
\eqref{eq:informedprop} as $q(\mathcal D'\mid\mathcal D)=g\{\gamma(\mathcal D')/
\gamma(\mathcal D)\}/Z_g(\mathcal D)$, the Metropolis--Hastings ratio is
\begin{eqnarray*}
r=\frac{\gamma(\mathcal D')\,q(\mathcal D\mid\mathcal D')}
{\gamma(\mathcal D)\,q(\mathcal D'\mid\mathcal D)}
=\frac{\gamma(\mathcal D')\,g\{\gamma(\mathcal D)/\gamma(\mathcal D')\}\,
Z_g(\mathcal D)}{\gamma(\mathcal D)\,g\{\gamma(\mathcal D')/\gamma(\mathcal D)\}\,
Z_g(\mathcal D')} .
\end{eqnarray*}
The balancing identity $g(t)=t\,g(1/t)$, satisfied by $g(t)=\sqrt t$, gives
\begin{eqnarray*}
g\{\gamma(\mathcal D)/\gamma(\mathcal D')\}
=\frac{\gamma(\mathcal D)}{\gamma(\mathcal D')}\,
g\{\gamma(\mathcal D')/\gamma(\mathcal D)\}.
\end{eqnarray*}
Substituting,
the factors $\gamma(\mathcal D')$, $\gamma(\mathcal D)$ and
$g\{\gamma(\mathcal D')/\gamma(\mathcal D)\}$ cancel and
$r=Z_g(\mathcal D)/Z_g(\mathcal D')$, which is \eqref{eq:accept}. Reversibility
follows because the kernel $P(\mathcal D,\mathcal D')=q(\mathcal D'\mid\mathcal D)
\min\{1,r\}$ then satisfies $\gamma(\mathcal D)P(\mathcal D,\mathcal D')
=\gamma(\mathcal D')P(\mathcal D',\mathcal D)$ for the unnormalized target
$\gamma\propto\pi(\cdot\mid\widehat{\bm Z})$, by the standard detailed-balance
verification with symmetric support. The Peskun-optimality of $g(t)=\sqrt t$ within
pointwise-informed proposals on $\mathcal N(\mathcal D)$ is
\citet[Theorem~1]{zanella2020}.
\end{proof}

\section{The Bessel-free score: proof of Proposition~\ref{prop:bessfree}}
\label{app:bessel}

The uniform asymptotic expansion of \citet[\S10.41]{dlmf} for the modified Bessel
function of large order $\mu>0$ is, with $w=x/\mu$, $t=(1+w^2)^{-1/2}$ and
$\eta=(1+w^2)^{1/2}+\log\{w/(1+(1+w^2)^{1/2})\}$,
\begin{eqnarray}
K_{\mu}(\mu w)=\sqrt{\frac{\pi}{2\mu}}\,\frac{e^{-\mu\eta}}{(1+w^2)^{1/4}}
\sum_{k=0}^{\infty}(-1)^k\,\frac{u_k(t)}{\mu^k},
\label{eq:uniformfull}
\end{eqnarray}
where $u_0(t)=1$, $u_1(t)=(3t-5t^3)/24$, $u_2(t)=(81t^2-462t^4+385t^6)/1152$, and
the remaining $u_k$ are generated by the recursion
$u_{k+1}(t)=\tfrac12 t^2(1-t^2)u_k'(t)+\tfrac18\int_0^t(1-5r^2)u_k(r)\,dr$. The
expansion \eqref{eq:uniformfull} is, by \citet[\S10.41(ii)]{dlmf}, a genuine
asymptotic expansion in $\mu$ that holds \emph{uniformly} for $w\in(0,\infty)$;
consequently, for each fixed truncation order $K$ the remainder after $K$ terms is
$O(\mu^{-K})$ uniformly in $w$, with the implied constant controlled by the
supremum of $|u_K|$ over the relevant range of $t$ (strict envelope bounds of
Olver type are available but not needed here).

\begin{proof}[Proof of Proposition~\ref{prop:bessfree}]
Write $\mu=n/2-a>0$, so that $K_{a-n/2}=K_{-\mu}=K_{\mu}$, and set $x=\sqrt{2bQ_S}$,
$w=x/\mu$. For $Q_S\le cn$ we have $x\le\sqrt{2bcn}$, hence
$w=x/\mu\le\sqrt{2bcn}/(n/2-a)=O(n^{-1/2})$, so $w$ lies in a fixed bounded
interval and $t=(1+w^2)^{-1/2}\in[t_0,1]$ for a constant $t_0>0$. Truncating
\eqref{eq:uniformfull} after the $u_1$ term defines
\begin{eqnarray*}
\widetilde K_{\mu}(x)=\sqrt{\frac{\pi}{2\mu}}\,\frac{e^{-\mu\eta}}{(1+w^2)^{1/4}}
\Big(1-\frac{u_1(t)}{\mu}\Big),
\end{eqnarray*}
and the uniform expansion gives, for an absolute constant $c_K$,
$|K_{\mu}(x)-\widetilde K_{\mu}(x)|\le c_K\sqrt{\pi/(2\mu)}\,
e^{-\mu\eta}(1+w^2)^{-1/4}\,|u_2(t)|/\mu^{2}$. Since $|u_2(t)|\le|u_2(t_0)|=:U_2$
is bounded on $[t_0,1]$, dividing by $K_{\mu}(x)$ and using
$K_{\mu}(x)=\sqrt{\pi/(2\mu)}e^{-\mu\eta}(1+w^2)^{-1/4}\{1+O(\mu^{-1})\}$ yields
\begin{eqnarray}
\left|\frac{\widetilde K_{\mu}(x)}{K_{\mu}(x)}-1\right|
\le \frac{c_K U_2}{\mu^{2}}\{1+O(\mu^{-1})\}\le \frac{C_0}{n^{2}}
\label{eq:Kerr}
\end{eqnarray}
for a constant $C_0=C_0(a,b,c)$ and all $n$ large enough, because
$\mu^{-2}=(n/2-a)^{-2}=4n^{-2}\{1+O(n^{-1})\}$. The node score \eqref{eq:scoreNG}
is, apart from factors not involving the Bessel term, proportional to
$Q_S^{(2a-n)/4}K_{\mu}(x)$; replacing $K_{\mu}$ by $\widetilde K_{\mu}$ multiplies
the score by $\widetilde K_{\mu}(x)/K_{\mu}(x)$, so \eqref{eq:Kerr} is exactly the
relative-error bound \eqref{eq:relerr} with $C=C_0$.

For the total-variation bound, write the exact and approximate unnormalized DAG
posteriors as $\gamma(\mathcal D)=\prod_j m^{\mathrm{NG}}(j,\mathrm{pa}(j))\,
\pi(\mathcal D)$ and $\widetilde\gamma(\mathcal D)=\prod_j\widetilde m^{\mathrm{NG}}
(j,\mathrm{pa}(j))\,\pi(\mathcal D)$. By \eqref{eq:relerr},
$|\log\widetilde m^{\mathrm{NG}}(j,S)-\log m^{\mathrm{NG}}(j,S)|
\le\log(1+C/n^2)\le C/n^2$ for every node, so the total log-perturbation
$\Delta(\mathcal D)=\log\widetilde\gamma(\mathcal D)-\log\gamma(\mathcal D)$
satisfies $|\Delta(\mathcal D)|\le Cp/n^2=:\eta_n$ uniformly in $\mathcal D$. Let
$W=\sum_{\mathcal D}\gamma(\mathcal D)$ and $\widetilde W=\sum_{\mathcal D}
\gamma(\mathcal D)e^{\Delta(\mathcal D)}$; then $e^{-\eta_n}W\le\widetilde W\le
e^{\eta_n}W$, and for every $\mathcal D$
\begin{eqnarray*}
\frac{\widetilde\pi(\mathcal D)}{\pi(\mathcal D)}
=e^{\Delta(\mathcal D)}\frac{W}{\widetilde W}\in[e^{-2\eta_n},e^{2\eta_n}].
\end{eqnarray*}
Therefore $\|\widetilde\pi-\pi\|_{\mathrm{TV}}
=\tfrac12\sum_{\mathcal D}\pi(\mathcal D)\,|\widetilde\pi(\mathcal D)/
\pi(\mathcal D)-1|\le\tfrac12(e^{2\eta_n}-1)$ with $\eta_n=Cp/n^2$, which is
\eqref{eq:tv}; when $p=o(n^2)$ this equals $\eta_n+O(\eta_n^2)=Cp/n^2+o(p^2/n^4)$.
The same argument applies verbatim to
the Bessel-free score itself, since the propagation bound of
Theorem~\ref{thm:propagation} is stated for both scores.
\end{proof}

\section{Differential shrinkage: proof of Theorem~\ref{thm:shrinkage}}
\label{app:shrinkage}

\begin{lemma}[Score derivative as posterior precision]
\label{lem:precision}
With $h_{\bullet}$ as in \eqref{eq:hdef},
$h_{\bullet}'(Q)=-\tfrac12\,\E_{\bullet}[D^{-1}\mid Q]$, where the expectation is
under the variance posterior $\pi_{\bullet}(D\mid Q)\propto(2\pi D)^{-n/2}
e^{-Q/(2D)}\pi_{\bullet}(D)$.
\end{lemma}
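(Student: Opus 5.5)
The plan is to differentiate under the integral sign in the definition \eqref{eq:hdef}. Write
\[
I(Q)=\int_0^\infty (2\pi D)^{-n/2}e^{-Q/(2D)}\pi_{\bullet}(D)\,dD ,
\]
so that $h_{\bullet}(Q)=\log I(Q)$ and $h_{\bullet}'(Q)=I'(Q)/I(Q)$. Formally,
\[
\frac{\partial}{\partial Q}\Big[(2\pi D)^{-n/2}e^{-Q/(2D)}\Big]=-\frac{1}{2D}\,(2\pi D)^{-n/2}e^{-Q/(2D)} .
\]
Hence
\[
I'(Q)=-\tfrac12\int_0^\infty D^{-1}(2\pi D)^{-n/2}e^{-Q/(2D)}\pi_{\bullet}(D)\,dD .
\]
Dividing by $I(Q)$ gives exactly $-\tfrac12$ times the expectation of $D^{-1}$ under the normalized density $\pi_{\bullet}(D\mid Q)\propto(2\pi D)^{-n/2}e^{-Q/(2D)}\pi_{\bullet}(D)$, which is the claim.

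The only real work is justifying the interchange of derivative and integral, together with finiteness of $I(Q)$ and of the posterior mean of $D^{-1}$, for every $Q>0$. I will fix a compact interval $[Q_1,Q_2]\subset(0,\infty)$ containing $Q$ and dominate the $Q$-derivative of the integrand uniformly on that interval by
\[
g(D)=\tfrac12 D^{-1}(2\pi D)^{-n/2}e^{-Q_1/(2D)}\pi_{\bullet}(D) .
\]
Then I show that $g$ is integrable on $(0,\infty)$.

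Near $D=0$, the factor $e^{-Q_1/(2D)}$ kills any power $D^{-k}$. For the Inverse-Gamma prior this also absorbs the extra $D^{-a_0-1}e^{-b_0/D}$. As $D\to\infty$, the integrand behaves like $D^{-n/2-1}\pi_{\bullet}(D)$. For the Gamma prior this is dominated by $e^{-bD}$. For the Inverse-Gamma prior it is $D^{-n/2-a_0-2}$, which is integrable since $n/2+a_0+1>0$. Dominated convergence (equivalently, the mean value theorem plus dominated convergence) then licenses the interchange. Positivity of the integrand gives $I(Q)>0$, so the logarithm and the division are legitimate.

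I expect no genuine obstacle. The one point to watch is that the bound must be uniform in $Q$ on a neighbourhood, which is why I use $Q_1>0$. As a sanity check, under NIG the posterior of $D$ is $\IG(n/2+a_0,\,Q/2+b_0)$. Its mean of $D^{-1}$ is $(n+2a_0)/(Q+2b_0)$, matching $-2$ times the derivative of $-(n/2+a_0)\log(Q/2+b_0)$ from Theorem~\ref{thm:scoreNIG}. Under NG, the posterior is $\GIG(a-n/2,2b,Q)$, and Lemma~\ref{lem:gigint} yields the Bessel ratio in \eqref{eq:precmeans}, consistent with Theorem~\ref{thm:shrinkage}.
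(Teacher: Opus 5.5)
Your proposal is correct and follows the same route as the paper: write $h_{\bullet}=\log I(Q)$, differentiate under the integral sign, and justify the interchange by dominated convergence on a neighbourhood of $Q>0$. The difference is that you make the integrable envelope explicit (bounding $e^{-Q/(2D)}\le e^{-Q_1/(2D)}$ for $Q\ge Q_1$ and checking integrability at $D\to0$ and $D\to\infty$ for each prior), which the paper only asserts.
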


\begin{proof}
Differentiating \eqref{eq:hdef} under the integral sign, justified by dominated
convergence because the integrand and its $Q$-derivative are bounded by an
integrable envelope on any neighbourhood of $Q>0$,
\begin{eqnarray*}
h_{\bullet}'(Q)=\frac{\int_0^\infty\big(-\tfrac{1}{2D}\big)(2\pi D)^{-n/2}
e^{-Q/(2D)}\pi_{\bullet}(D)\,dD}
{\int_0^\infty(2\pi D)^{-n/2}e^{-Q/(2D)}\pi_{\bullet}(D)\,dD}
=-\tfrac12\,\E_{\bullet}[D^{-1}\mid Q].
\end{eqnarray*}
\end{proof}

\begin{lemma}[Moments of the generalized inverse Gaussian]
\label{lem:gigmom}
If $D\sim\GIG(\lambda,\psi,\chi)$ then for every real $r$,
$\E[D^{r}]=(\chi/\psi)^{r/2}\,K_{\lambda+r}(\sqrt{\psi\chi})/K_{\lambda}
(\sqrt{\psi\chi})$. In particular $\E[D^{-1}]=\sqrt{\psi/\chi}\,
K_{\lambda-1}(\sqrt{\psi\chi})/K_{\lambda}(\sqrt{\psi\chi})$, using
$K_{\lambda-1}=K_{1-\lambda}$ when convenient.
\end{lemma}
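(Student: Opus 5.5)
The plan is to compute $\E[D^r]$ directly from the definition of the $\GIG(\lambda,\psi,\chi)$ density, using the normalizer $J$ of Lemma~\ref{lem:gigint}. By definition the density is $x^{\lambda-1}\exp\{-\tfrac12(\psi x+\chi/x)\}/J(\lambda,\psi,\chi)$ on $x>0$. Lemma~\ref{lem:gigint} holds for every real order and all $\psi,\chi>0$, so this normalizer is finite and equals $2(\chi/\psi)^{\lambda/2}K_\lambda(\sqrt{\psi\chi})$.

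Multiplying the density by $x^r$ simply shifts the power to $x^{(\lambda+r)-1}$. Hence
\[
\E[D^r]=\frac{J(\lambda+r,\psi,\chi)}{J(\lambda,\psi,\chi)},
\]
and the numerator is again finite by Lemma~\ref{lem:gigint}, now with order $\lambda+r$. This finiteness for all real $r$ is the only point needing care. It holds because the factor $e^{-\chi/(2x)}$ controls any power of $x$ near zero, while $e^{-\psi x/2}$ controls any power at infinity; both rely on $\psi,\chi>0$.

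Next, substitute the closed form into both $J$'s. The factors of $2$ cancel, and the powers combine as $(\chi/\psi)^{(\lambda+r)/2-\lambda/2}=(\chi/\psi)^{r/2}$, leaving
\[
\E[D^r]=\Big(\frac{\chi}{\psi}\Big)^{r/2}\frac{K_{\lambda+r}(\sqrt{\psi\chi})}{K_\lambda(\sqrt{\psi\chi})}.
\]
Setting $r=-1$ gives $(\chi/\psi)^{-1/2}=\sqrt{\psi/\chi}$ and the stated formula for $\E[D^{-1}]$. The alternative form follows from the symmetry $K_\nu=K_{-\nu}$, which was already invoked in the proof of Lemma~\ref{lem:gigint}.

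There is no real obstacle in this argument. The only thing to keep straight is the exponent bookkeeping, since $J$ is written in terms of $(\chi/\psi)^{\nu/2}$ and it is easy to invert the ratio by mistake.
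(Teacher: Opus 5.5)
Your proposal is correct and follows essentially the same route as the paper: write $\E[D^r]$ as the ratio $J(\lambda+r,\psi,\chi)/J(\lambda,\psi,\chi)$ of the normalizers from Lemma~\ref{lem:gigint}, substitute the Bessel closed form twice, and cancel the factors of $2$ and the $(\chi/\psi)$ powers. Your explicit check that the numerator is finite for every real $r$ (which uses $\psi,\chi>0$) is a small point of care that the paper leaves implicit.
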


\begin{proof}
By definition the $\GIG(\lambda,\psi,\chi)$ density is
$x^{\lambda-1}e^{-\frac12(\psi x+\chi/x)}/J(\lambda,\psi,\chi)$ with
$J$ as in Lemma~\ref{lem:gigint}. Hence
$\E[D^r]=J(\lambda+r,\psi,\chi)/J(\lambda,\psi,\chi)$, and substituting
\eqref{eq:gigint} twice,
\begin{eqnarray*}
\E[D^r]=\frac{2(\chi/\psi)^{(\lambda+r)/2}K_{\lambda+r}(\sqrt{\psi\chi})}
{2(\chi/\psi)^{\lambda/2}K_{\lambda}(\sqrt{\psi\chi})}
=(\chi/\psi)^{r/2}\frac{K_{\lambda+r}(\sqrt{\psi\chi})}{K_{\lambda}(\sqrt{\psi\chi})}.
\end{eqnarray*}
Setting $r=-1$ gives the stated precision moment.
\end{proof}

\begin{proof}[Proof of Theorem~\ref{thm:shrinkage}]
\emph{Posterior precisions.} For NPN-NG the variance posterior given $Q$ is, by
the computation in the proof of Theorem~\ref{thm:scoreNG} restricted to the
$D$-integrand, the $\GIG(\lambda=a-n/2,\psi=2b,\chi=Q)$ law, with density
proportional to $D^{(a-n/2)-1}e^{-bD-Q/(2D)}$. Lemma~\ref{lem:gigmom} with $r=-1$
and $\omega=\sqrt{\psi\chi}=\sqrt{2bQ}$ gives
\begin{eqnarray*}
\E_{\mathrm{NG}}[D^{-1}\mid Q]=\sqrt{2b/Q}\;K_{a-n/2-1}(\omega)/K_{a-n/2}(\omega),
\end{eqnarray*}
the first identity in \eqref{eq:precmeans}. For NPN-NIG the posterior is
$\IG(n/2+a_0,Q/2+b_0)$, whose reciprocal is $\Ga(n/2+a_0,Q/2+b_0)$ with mean
$(n/2+a_0)/(Q/2+b_0)=(n+2a_0)/(Q+2b_0)$, the second identity.

\emph{Large-order Bessel ratio, with the first correction retained.} Put
$\mu=n/2-a$, so $K_{a-n/2}=K_{\mu}$ and $K_{a-n/2-1}=K_{-\mu-1}=K_{\mu+1}$, whence
the NG precision is $\sqrt{2b/Q}\,T_\mu$ with $T_\mu:=K_{\mu+1}(\omega)/
K_{\mu}(\omega)$. The three-term recurrence \citep[\S10.29.1]{dlmf},
$K_{\mu+1}(\omega)=(2\mu/\omega)K_{\mu}(\omega)+K_{\mu-1}(\omega)$, gives the exact
relation
\begin{eqnarray}
T_\mu=\frac{2\mu}{\omega}+\frac{1}{T_{\mu-1}} .
\label{eq:Trecur}
\end{eqnarray}
Crucially the correction $1/T_{\mu-1}$ is \emph{not} negligible: with
$\omega=\sqrt{2bQ}=O(\sqrt n)$ and $\mu=O(n)$ the leading term is $2\mu/\omega
=O(\sqrt n)$, while $1/T_{\mu-1}=O(\omega/\mu)=O(n^{-1/2})$, and after
multiplication by $\sqrt{2b/Q}=O(n^{-1/2})$ the two contributions to the
precision are of orders $O(1)$ and $O(n^{-1})$ respectively---and it is precisely
the $O(n^{-1})$ piece that determines the comparison below. We bound it sharply.
Since $T_{\mu-1}>0$, iterating \eqref{eq:Trecur} once gives
$2(\mu-1)/\omega\le T_{\mu-1}\le 2(\mu-1)/\omega+\omega/(2(\mu-2))$, hence
\begin{eqnarray*}
\frac{1}{T_{\mu-1}}=\frac{\omega}{2(\mu-1)}\Big\{1+O\big(\omega^2/\mu^2\big)\Big\}
=\frac{\omega}{2\mu}\big\{1+O(n^{-1})\big\},
\end{eqnarray*}
using $\omega^2/\mu^2=2bQ/\mu^2=O(n^{-1})$ and $(\mu-1)^{-1}=\mu^{-1}\{1+O(n^{-1})\}$.
Substituting into \eqref{eq:Trecur},
\begin{eqnarray*}
T_\mu=\frac{2\mu}{\omega}+\frac{\omega}{2\mu}+O\!\big(\omega\mu^{-1}n^{-1}\big)
=\frac{2\mu}{\omega}+\frac{\omega}{2\mu}+O(n^{-3/2}),
\end{eqnarray*}
and therefore, using $\sqrt{2b/Q}\cdot 2\mu/\omega=2\mu/Q$ and
$\sqrt{2b/Q}\cdot\omega/(2\mu)=b/\mu$,
\begin{eqnarray}
\E_{\mathrm{NG}}[D^{-1}\mid Q]
=\sqrt{\tfrac{2b}{Q}}\,T_\mu
=\frac{2\mu}{Q}+\frac{b}{\mu}+O(n^{-2})
=\frac{n-2a}{Q}+\frac{2b}{n-2a}+O(n^{-2}).
\label{eq:ENGexp}
\end{eqnarray}
The term $2b/(n-2a)$, of order $n^{-1}$, is exactly the contribution a
leading-order treatment would discard; retaining it is essential.

\emph{Difference and the quadratic characterization.} The two precisions are now
subtracted \emph{at the common value of $Q$}, before any passage to the limit; this
is essential, since each precision is $\sigma^{-2}+O(n^{-1})$ and the leading
$\sigma^{-2}$ terms must cancel exactly rather than be approximated separately.
From \eqref{eq:ENGexp} and $\E_{\mathrm{NIG}}[D^{-1}\mid Q]=(n+2a_0)/(Q+2b_0)$,
\begin{eqnarray*}
\E_{\mathrm{NG}}[D^{-1}\mid Q]-\E_{\mathrm{NIG}}[D^{-1}\mid Q]
=\Big(\frac{n-2a}{Q}+\frac{2b}{n-2a}\Big)-\frac{n+2a_0}{Q+2b_0}+O(n^{-2}),
\end{eqnarray*}
an exact identity in $(n,Q)$ up to the stated remainder. Writing $Q=n\sigma_n^2$
and expanding each rational term in $1/n$ \emph{after} the subtraction,
\begin{eqnarray*}
\frac{n-2a}{Q}-\frac{n+2a_0}{Q+2b_0}
&=&\frac{1}{\sigma_n^2}-\frac{2a}{n\sigma_n^2}
-\Big(\frac{1}{\sigma_n^2}+\frac{2a_0}{n\sigma_n^2}-\frac{2b_0}{n\sigma_n^4}\Big)
+O(n^{-2})\\
&=&\frac{-2a-2a_0+2b_0/\sigma_n^2}{n\sigma_n^2}+O(n^{-2}),
\end{eqnarray*}
in which the $\sigma_n^{-2}$ terms have cancelled identically; adding
$2b/(n-2a)=2b/n+O(n^{-2})$ and collecting over $n\sigma_n^4$ gives
\begin{eqnarray}
\E_{\mathrm{NG}}[D^{-1}\mid Q]-\E_{\mathrm{NIG}}[D^{-1}\mid Q]
=\frac{2}{n\,\sigma_n^4}\Big\{b\,\sigma_n^4-(a+a_0)\,\sigma_n^2+b_0\Big\}+O(n^{-2})
=\frac{2\,\varphi(\sigma_n^2)}{n\,\sigma_n^4}+O(n^{-2}),
\label{eq:precdiff2}
\end{eqnarray}
with $\varphi(v)=b\,v^2-(a+a_0)\,v+b_0$, an upward parabola in $v=\sigma^2$. This
is \eqref{eq:precdiff}; passing $\sigma_n^2\to\sigma^2$ replaces
$\varphi(\sigma_n^2)/\sigma_n^4$ by $\varphi(\sigma^2)/\sigma^4$ up to a
$\{1+o(1)\}$ factor. Since the displayed term is of exact order $n^{-1}$
whenever $\varphi(\sigma^2)\ne0$ and the remainder is $O(n^{-2})$, the sign of the
difference then equals that of $\varphi(\sigma^2)$ for all large $n$; at
$\varphi(\sigma^2)=0$ the difference is $O(n^{-2})$ and the present analysis
leaves the ordering undetermined. The discriminant of $\varphi$ is
$(a+a_0)^2-4bb_0$. If $(a+a_0)^2\le 4bb_0$ then $\varphi\ge0$ on $(0,\infty)$, so
$\E_{\mathrm{NG}}[D^{-1}\mid Q]\ge\E_{\mathrm{NIG}}[D^{-1}\mid Q]$ for every
$\sigma^2$ at which $\varphi>0$ (with equality to leading order at a tangency
point, if any). If $(a+a_0)^2>4bb_0$ then $\varphi<0$ exactly on the open
interval $(v_-,v_+)$ with roots
$v_\pm=\{(a+a_0)\pm\sqrt{(a+a_0)^2-4bb_0}\}/(2b)$, and there
$\E_{\mathrm{NG}}[D^{-1}\mid Q]<\E_{\mathrm{NIG}}[D^{-1}\mid Q]$ for all large $n$.
By Lemma~\ref{lem:precision} the edge log-Bayes-factor under either prior is
$\tfrac12\int_{Q_{S\cup\{k\}}}^{Q_S}\E_{\bullet}[D^{-1}\mid q]\,dq$ plus the common
determinant term, so on any compact subinterval of $(v_-,v_+)$ the strictly
smaller NG integrand yields a strictly smaller edge Bayes factor, i.e.\ stronger
shrinkage of the weakly supported edge; outside $[v_-,v_+]$ the ordering reverses.
This proves Theorem~\ref{thm:shrinkage}.
\end{proof}

\section{Transform-error propagation: proof of Theorem~\ref{thm:propagation}}
\label{app:propagation}

\begin{proof}
\emph{Step 1: the residual quadratic form is a Schur complement of the sample
covariance.} For a parent set $S$ with $|S|=s$, write the (unscaled) sample
cross-products $G=\widehat{\bm Z}_S^\top\widehat{\bm Z}_S$, $\bm c=
\widehat{\bm Z}_S^\top\widehat{\bm z}_j$, $d=\widehat{\bm z}_j^\top
\widehat{\bm z}_j$, all equal to $n$ times the corresponding entries of the
nonparanormal sample covariance $\widehat\Sigma$. From \eqref{eq:Qdef},
$Q_S=d-\bm c^\top(G+\tau^{-2}I_s)^{-1}\bm c$, so
\begin{eqnarray}
\frac{Q_S}{n}=\widehat\Sigma_{jj}-\widehat\Sigma_{jS}
\big(\widehat\Sigma_{SS}+(n\tau^2)^{-1}I_s\big)^{-1}\widehat\Sigma_{Sj}
=:\, g_S(\widehat\Sigma).
\label{eq:schur}
\end{eqnarray}
The map $g_S$ is the (ridge-regularized) Schur complement; the regularization
$(n\tau^2)^{-1}\to0$ is negligible and is kept only to ensure invertibility.
Let $g_S(\Sigma)$ denote the same functional evaluated at the population latent
correlation $\Sigma$, which equals the true innovation variance $D^0_{jj}$ when
$S=\mathrm{pa}_{\mathcal D_0}(j)$.

\emph{Step 2: Lipschitz bound for $g_S$.} On the event \eqref{eq:npndev} write
$\widehat\Sigma=\Sigma+E$ with $\|E\|_{\max}\le\varepsilon:=c_1\sqrt{\log p/n}$.
By the eigenvalue assumption, $\Sigma_{SS}$ and $\Sigma_{SS}+E_{SS}$ have
eigenvalues in $[\kappa^{-1},\kappa]$ for $n$ large enough that
$s\varepsilon\le\tfrac12\kappa^{-1}$, hence both are invertible with operator
norm of the inverse at most $\kappa$. Using the resolvent identity
$A^{-1}-B^{-1}=A^{-1}(B-A)B^{-1}$ with $A=\widehat\Sigma_{SS}+(n\tau^2)^{-1}I$,
$B=\Sigma_{SS}+(n\tau^2)^{-1}I$, and $\|E_{SS}\|_{\mathrm{op}}\le s\varepsilon$,
\begin{eqnarray*}
\|A^{-1}-B^{-1}\|_{\mathrm{op}}\le\kappa^2\,\|E_{SS}\|_{\mathrm{op}}
\le\kappa^2 s\varepsilon .
\end{eqnarray*}
Writing $g_S(\widehat\Sigma)-g_S(\Sigma)$ as the sum of three differences---one
from $\widehat\Sigma_{jj}-\Sigma_{jj}$, one from $\widehat\Sigma_{jS}-\Sigma_{jS}$
in the bilinear term, and one from the inverse just bounded---and using
$\|\Sigma_{jS}\|\le\sqrt s$, $\|\widehat\Sigma_{jS}\|\le\sqrt s(1+\varepsilon)$
and $\|A^{-1}\|_{\mathrm{op}},\|B^{-1}\|_{\mathrm{op}}\le\kappa$, the triangle
inequality gives
\begin{eqnarray}
\big|g_S(\widehat\Sigma)-g_S(\Sigma)\big|
\le \varepsilon+2\sqrt s\,\varepsilon\cdot\kappa\sqrt s+\kappa^2 s\varepsilon\,s
\le C_1\,\kappa^2 s^2\,\varepsilon,
\label{eq:gLip}
\end{eqnarray}
for an absolute constant $C_1$, hence $|Q_S^{\widehat{\bm Z}}-Q_S^{\bm Z}|
\le C_1\kappa^2 s^2 n\varepsilon=C_1\kappa^2 s^2\sqrt{n\log p}\cdot\sqrt{c_1^2}$.

\emph{Step 3: Lipschitz bound for $h_{\bullet}$.} By Lemma~\ref{lem:precision},
$|h_{\bullet}'(Q)|=\tfrac12\E_{\bullet}[D^{-1}\mid Q]$. On the eigenvalue-bounded
set, $g_S(\widehat\Sigma)\in[\tfrac12\kappa^{-1},2\kappa]$, so $Q_S\in[\tfrac12
\kappa^{-1}n,2\kappa n]$, and by the precision formulas \eqref{eq:precmeans},
\begin{eqnarray*}
\E_{\mathrm{NIG}}[D^{-1}\mid Q_S]=\frac{n+2a_0}{Q_S+2b_0}\le
\frac{n+2a_0}{\tfrac12\kappa^{-1}n}\le 3\kappa
\quad\text{and}\quad
\E_{\mathrm{NG}}[D^{-1}\mid Q_S]=\frac{n-2a}{Q_S}\{1+o(1)\}\le 3\kappa,
\end{eqnarray*}
for $n$ large, so $\sup_{Q\in[\tfrac12\kappa^{-1}n,2\kappa n]}|h_{\bullet}'(Q)|
\le \tfrac32\kappa=:L$. Therefore, by the mean value theorem,
\begin{eqnarray*}
\big|h^{\widehat{\bm Z}}_{\bullet}(Q_S^{\widehat{\bm Z}})
-h^{\bm Z}_{\bullet}(Q_S^{\bm Z})\big|
\le L\,\big|Q_S^{\widehat{\bm Z}}-Q_S^{\bm Z}\big|
\le \tfrac32\kappa\cdot C_1\kappa^2 s^2 n\varepsilon
= C_\star\,s^2\,n\sqrt{\tfrac{\log p}{n}},
\end{eqnarray*}
with $C_\star=\tfrac32 C_1 c_1\kappa^3$. Dividing by $n$ and taking the maximum
over $j$ and over $|S|\le s_{\max}$ gives exactly \eqref{eq:propagation}, since
$s\le s_{\max}$. The argument uses
$h_{\bullet}$ only through its bounded derivative, so it applies identically to the
Bessel-free score, whose derivative differs from $h_{\mathrm{NG}}'$ by the
$O(n^{-2})$ term of Proposition~\ref{prop:bessfree}.
\end{proof}

\section{Selection consistency: proof of Theorem~\ref{thm:selection}}
\label{app:selection}

\begin{proof}[Proof of Theorem~\ref{thm:selection}]
Throughout, $\bullet$ denotes either prior; constants are explicit. We first
establish separation at the latent-Gaussian level (as if $\bm Z$ were observed),
then transfer it to the plug-in scores by Theorem~\ref{thm:propagation}. Write
the log posterior ratio of an in-ordering alternative $\mathcal D$ against
$\mathcal D_0$ as a node-wise sum,
\begin{eqnarray}
\log\frac{\pi(\mathcal D\mid\bm Z)}{\pi(\mathcal D_0\mid\bm Z)}
=\sum_{j=1}^{p}\Big\{\Delta h_j-\tfrac12\Delta\log|M_{\cdot}|_j\Big\}
+\log\frac{\pi(\mathcal D)}{\pi(\mathcal D_0)},
\label{eq:logratio}
\end{eqnarray}
where $\Delta h_j=h_{\bullet}(Q^{\bm Z}_{S_j})-h_{\bullet}(Q^{\bm Z}_{S_j^0})$,
$S_j=\mathrm{pa}_{\mathcal D}(j)$, $S_j^0=\mathrm{pa}_{\mathcal D_0}(j)$, and only
nodes with $S_j\ne S_j^0$ contribute.

\emph{Step 1: over-selection (a single spurious edge).} Suppose $S_j=S_j^0\cup
\{k\}$ with $k\notin\mathrm{pa}_{\mathcal D_0}(j)$. By Lemma~\ref{lem:precision},
$\Delta h_j=\tfrac12\int_{Q_{S_j}}^{Q_{S_j^0}}\E_{\bullet}[D^{-1}\mid q]\,dq$.
Adding a null regressor reduces the residual quadratic form by
$\Delta Q=Q_{S_j^0}-Q_{S_j}=\bm z_j^\top P\,\bm z_j$, where $P$ projects onto the
component of $\bm z_k$ orthogonal to $\widehat{\bm Z}_{S_j^0}$; since edge $k$ is
absent in $\mathcal D_0$, $\Delta Q/D^0_{jj}$ is asymptotically $\chi^2_1$ and
$\Delta Q=O_{\Prob}(1)\cdot D^0_{jj}$. Because $\E_{\bullet}[D^{-1}\mid q]=
(1+o(1))/D^0_{jj}$ uniformly on the integration range, $\Delta h_j=\tfrac12
(\Delta Q/D^0_{jj})\{1+o_{\Prob}(1)\}=O_{\Prob}(1)$. The determinant term is, by
$|M_S|=\tau^{2s}|\Lambda_S|$ and the matrix determinant lemma,
$\tfrac12\Delta\log|M|_j=\tfrac12\log\{1+\tau^2\,\mathrm{r}_k\}$ where
$\mathrm{r}_k=\bm z_k^\top(I-\widehat{\bm Z}_{S_j^0}\Lambda_{S_j^0}^{-1}
\widehat{\bm Z}_{S_j^0}^\top)\bm z_k=O_{\Prob}(n)$; hence
$\tfrac12\Delta\log|M|_j=\tfrac12\log n+O_{\Prob}(1)$. The edge-prior term
contributes $\log\{\theta_n/(1-\theta_n)\}=-(1+u)\log p+o(\log p)$ by (C3).
Collecting,
\begin{eqnarray}
\log\frac{\pi(\mathcal D\mid\bm Z)}{\pi(\mathcal D_0\mid\bm Z)}
= -\tfrac12\log n-(1+u)\log p+O_{\Prob}(1)
\label{eq:over}
\end{eqnarray}
per spurious edge. For an alternative with $r$ spurious edges and no missing
edges, the bound \eqref{eq:over} adds across edges, and the number of such
alternatives is at most $\binom{\binom p2}{r}\le p^{2r}$. A union bound over them
contributes $2r\log p$ to the exponent, so the size-$r$ contribution is at most
$\exp[r\{2\log p-(1+u)\log p-\tfrac12\log n\}]\le\exp[-r\{(u-1)\log p
+\tfrac12\log n\}]$. Under (C3), $u>u_0\ge1$, so $(u-1)\log p+\tfrac12\log n
\to\infty$ and $\sum_{r\ge1}p^{2r}e^{-r\{(1+u)\log p+\frac12\log n\}}\to0$; the
total posterior mass on strict supersets of $\mathcal D_0$ therefore tends to
zero. (The surplus $(u-1)\log p$ is what Step~3 will spend on the transform
perturbation.)

\emph{Step 2: under-selection (a single missing true edge).} Suppose
$S_j=S_j^0\setminus\{k\}$ with $k\in\mathrm{pa}_{\mathcal D_0}(j)$ and true
standardized coefficient $|L^0_{jk}|\ge\beta_{\min}$. Now $\Delta Q=
Q_{S_j}-Q_{S_j^0}>0$ is the residual energy left unexplained by omitting $k$. A
direct computation with the partitioned least-squares formula gives
$\Delta Q=(L^0_{jk})^2\,\bm z_k^\top(I-\widehat{\bm Z}_{S_j}\Lambda_{S_j}^{-1}
\widehat{\bm Z}_{S_j}^\top)\bm z_k\{1+o_{\Prob}(1)\}$, and the quadratic form is
$\ge\kappa^{-1}n\{1+o_{\Prob}(1)\}$ by the eigenvalue bound, so
$\Delta Q\ge\kappa^{-1}n\beta_{\min}^2\{1+o_{\Prob}(1)\}$. Then, since the residual
variance $Q_S/n=\mathrm{Var}(Z_j\mid Z_S)\le\kappa$ for $|S|\le s_{\max}$ by the
order-$(s_{\max}+1)$ eigenvalue bound, the integrand satisfies
$\E_{\bullet}[D^{-1}\mid q]\ge n/Q_{S_j}\{1-o(1)\}\ge(2\kappa)^{-1}$ on the
range, whence
\begin{eqnarray}
\Delta h_j=-\tfrac12\int_{Q_{S_j^0}}^{Q_{S_j}}\E_{\bullet}[D^{-1}\mid q]\,dq
\le-\,\frac{\Delta Q}{4\kappa}
\le -\,\frac{n\beta_{\min}^2}{4\kappa^2}\{1+o_{\Prob}(1)\},
\label{eq:under}
\end{eqnarray}
because omitting the edge raises $Q$ and hence lowers $h_{\bullet}$. The
determinant and prior terms move in the favourable direction (a sparser graph is
rewarded by the prior and incurs a smaller $\log n$ penalty), so they only
strengthen the bound. Thus each missing true edge costs at least
$n\beta_{\min}^2/(4\kappa^2)$ in the exponent. Under (C4),
$n\beta_{\min}^2\ge c_\beta s_{\max}\sqrt{n\log(p\vee n)}$ with $c_\beta>32\kappa^2
(1+C_\star s_{\max})$, so this cost exceeds
$8 s_{\max}(1+C_\star s_{\max})\sqrt{n\log(p\vee n)}$, which
dominates both the $\le\binom{s_0}{m}\le p^{m}$ entropy of choosing $m$ true
edges to drop and the transform perturbation of Step 3.

\emph{Step 3: transfer to the plug-in scores.} On the event \eqref{eq:npndev},
of probability at least $1-c_2/p$, Theorem~\ref{thm:propagation} bounds the
perturbation of each node term in \eqref{eq:logratio} by $C_\star s_{\max}^2
\sqrt{n\log p}$; this applies uniformly because the in-degree cap makes every
parent set satisfy $|S|\le s_{\max}$, so (C2) controls every regression. Before
treating the two branches we record that a node whose parent set $S_j$ differs
from $S_j^0$ by an arbitrary combination of additions and deletions is covered by
the same two estimates: if $S_j$ omits any true parent $k$, then because
$|S_j|\le s_{\max}$ the order-$(s_{\max}+1)$ eigenvalue bound still gives
orthogonalized energy $\ge\kappa^{-1}n$ for $k$ given $S_j$, so the missing-edge
loss of Step~2, $\Delta h_j\le -n\beta_{\min}^2/(4\kappa^2)$,
holds regardless of any spurious parents present (which only add further penalty);
if instead $S_j\supseteq S_j^0$, the node is pure over-selection and Step~1
applies edge by edge. Thus \eqref{eq:logratio} decomposes, node by node, into
under- and over-selection contributions with no cross terms left unbounded.

For the \emph{under-selection} branch the perturbation is dominated by the
per-edge reward of Step~2, which by (C4) is at least
$8s_{\max}(1+C_\star s_{\max})\sqrt{n\log(p\vee n)}$: since at most $m$
nodes are affected and $s_{\max}=O(1)$, the aggregate perturbation
$m\,C_\star s_{\max}^2\sqrt{n\log p}$ is strictly smaller than the aggregate
reward, which is of order $m\,n\beta_{\min}^2$, leaving a net reward of that order
intact.

The \emph{over-selection} branch needs the sharper, and previously overlooked,
observation that the relevant quantity is not a single node score but the
\emph{increment} $\Delta Q_k=Q_{S_j^0}-Q_{S_j^0\cup\{k\}}$, the residual energy
attributed to the null edge $k$. By \eqref{eq:schur} this increment is a smooth
functional of the sample covariance: writing $\psi(\Sigma')$ for the partial
regression signal of $Z_j$ on $Z_k$ given $S_j^0$ at a covariance $\Sigma'$ (the
partial correlation times the square root of the conditional variance, a
Lipschitz functional with constant $L_\psi=L_\psi(\kappa,\tau^2,s_{\max})$ on the
eigenvalue-bounded set by the resolvent argument of Step~2 of the proof of
Theorem~\ref{thm:propagation}), one has $\Delta Q_k/n=\psi(\widehat\Sigma)^2$, and
$\psi(\Sigma)=0$: since $k\notin\mathrm{pa}_{\mathcal D_0}(j)$ and $k<j$ in the
topological order, $k$ is a non-descendant of $j$, so the local Markov property of
$\mathcal D_0$ gives $Z_j\perp Z_k\mid Z_{S_j^0}$, i.e.\ zero population partial
correlation. By standard
sub-Gaussian concentration of the Gaussian sample covariance,
$\|\widehat\Sigma^{\bm Z}-\Sigma\|_{\max}=O_{\Prob}(\sqrt{\log p/n})$, so on its
intersection with the event \eqref{eq:npndev} both $\widehat\Sigma^{\widehat{\bm Z}}$
and $\widehat\Sigma^{\bm Z}$ lie within $c\varepsilon$ of $\Sigma$ for an absolute
$c$, whence $|\psi(\widehat\Sigma^{\widehat{\bm Z}})|,
|\psi(\widehat\Sigma^{\bm Z})|\le L_\psi c\varepsilon$ and
$\|\widehat\Sigma^{\widehat{\bm Z}}-\widehat\Sigma^{\bm Z}\|_{\max}\le
2c\varepsilon$. The difference of squares then factors as
\begin{eqnarray*}
\big|\Delta Q_k^{\widehat{\bm Z}}-\Delta Q_k^{\bm Z}\big|
&=&n\,\big|\psi(\widehat\Sigma^{\widehat{\bm Z}})^2
-\psi(\widehat\Sigma^{\bm Z})^2\big|
=n\,\big|\psi(\widehat\Sigma^{\widehat{\bm Z}})-\psi(\widehat\Sigma^{\bm Z})\big|
\cdot\big|\psi(\widehat\Sigma^{\widehat{\bm Z}})+\psi(\widehat\Sigma^{\bm Z})\big|\\
&\le& n\,\big(L_\psi\cdot 2c\varepsilon\big)\,\big(2L_\psi c\varepsilon\big)
=4c^2 L_\psi^2\,n\varepsilon^2
= 4c^2 L_\psi^2 c_1^2\,\log p,
\end{eqnarray*}
using $n\varepsilon^2=c_1^2\log p$. Thus the perturbation of the incremental
energy is $O_{\Prob}(\log p)$, \emph{of the same order as the penalty}, rather
than $o_{\Prob}(\log p)$; the key gain over the crude per-node bound
$O(\sqrt{n\log p})$ is that the squared functional vanishes at the population
covariance, which removes a factor $\sqrt{n/\log p}$. Multiplying by the bounded
score derivative $|h_{\bullet}'|\le L$ leaves an $O_{\Prob}(\log p)$ contribution;
the determinant term contributes only $o_{\Prob}(1)$, because the orthogonalized
energy $r_k=\Theta_{\Prob}(n)$ is bounded away from zero and its perturbation is
$O(n\varepsilon)$, so $\tfrac12|\Delta\log(1+\tau^2 r_k)|=O(\varepsilon)=
o_{\Prob}(1)$. Hence the over-selection log-Bayes-factor on $\widehat{\bm Z}$ is
\begin{eqnarray*}
-\tfrac12\log n-(1+u)\log p+O_{\Prob}(\log p),
\qquad\text{with}\quad
\big|O_{\Prob}(\log p)\big|\le C_\star s_{\max}^2\,\log p\;(1+o_{\Prob}(1)),
\end{eqnarray*}
per spurious edge, where $C_\star$ absorbs $L_\psi^2,c^2,c_1^2$. The choice
$u>u_0=1+2C_\star s_{\max}^2$ in (C3) guarantees that $(1+u)\log p$ exceeds the entropy
$2\log p$ of Step~1 \emph{and} this $C_\star s_{\max}^2\log p$ perturbation
simultaneously, so the negative drift per spurious edge is preserved and the
union bound of Step~1 still closes.

\emph{Step 4: conclusion.} Combining Steps 1--3, on the event \eqref{eq:npndev},
for all $n$ large enough every alternative $\mathcal D\ne\mathcal D_0$ satisfies
\begin{eqnarray*}
\log\frac{\pi(\mathcal D\mid\widehat{\bm Z})}{\pi(\mathcal D_0\mid\widehat{\bm Z})}
&\le& -\,c\,n\beta_{\min}^2\,|\mathcal D_0\setminus\mathcal D|
-\Big(\tfrac12\log n+(1+u-C_\star s_{\max}^2)\log p\Big)\,|\mathcal D\setminus\mathcal D_0|,
\end{eqnarray*}
with $c=(8\kappa^2)^{-1}$, the constant $c$ and the spurious-edge
coefficient having already absorbed the $\{1+o_{\Prob}(1)\}$ factors of Steps~1--3
(for the missing-edge term this uses (C4), under which the transform perturbation
$O(s_{\max}^2\sqrt{n\log p})$ is $o(n\beta_{\min}^2)$, so halving the constant from
$(4\kappa^2)^{-1}$ to $c$ dominates it). This is \eqref{eq:odds}.
Now sum the exponentials over all alternatives. Group $\mathcal D$ by
$m_-=|\mathcal D_0\setminus\mathcal D|$ and $m_+=|\mathcal D\setminus\mathcal D_0|$;
the number with given $(m_-,m_+)$ is at most $\binom{|\mathcal D_0|}{m_-}
\binom{\binom p2}{m_+}\le (p\,s_{\max})^{m_-}p^{2m_+}$. Hence
\begin{eqnarray*}
\sum_{\mathcal D\ne\mathcal D_0}
\frac{\pi(\mathcal D\mid\widehat{\bm Z})}{\pi(\mathcal D_0\mid\widehat{\bm Z})}
\le\Big(\sum_{m_-\ge0}e^{-m_-(c\,n\beta_{\min}^2-\log(p s_{\max}))}\Big)
\Big(\sum_{m_+\ge0}e^{-m_+((1+u-C_\star s_{\max}^2-2)\log p+\frac12\log n)}\Big)-1 .
\end{eqnarray*}
For the missing-edge factor, $c\,n\beta_{\min}^2\ge c\,c_\beta s_{\max}
\sqrt{n\log(p\vee n)}\gg\log(p s_{\max})$ since $\log p=o(n)$, so the geometric
series equals $1+o(1)$. For the spurious-edge factor, the exponent rate is
$(u-1-C_\star s_{\max}^2)\log p+\tfrac12\log n$, which by (C3) is at least
$C_\star s_{\max}^2\log p+\tfrac12\log n\to\infty$, so that series is also
$1+o(1)$. The product minus one is therefore $o(1)$, giving
$\sum_{\mathcal D\ne\mathcal D_0}\pi(\mathcal D\mid\widehat{\bm Z})/
\pi(\mathcal D_0\mid\widehat{\bm Z})\to0$ in probability. Equivalently
$\pi(\mathcal D_0\mid\widehat{\bm Z})\to1$, which is \eqref{eq:selection}. Since
the event \eqref{eq:npndev} has probability $1-O(p^{-1})\to1$, the convergence is
in probability, as claimed.
\end{proof}

\bibliographystyle{plainnat}
\bibliography{refs}

\end{document}